    \newcommand{\dd}{\widetilde{d}}
    \newtheorem{defi}{Definition}[section]
    \newtheorem{prop}{Proposition}[section]
    \newtheorem{lemma}{Lemma}[section]
    \numberwithin{equation}{section}
\begin{document}

\title{A geometrical approach to nontrivial topology via exotic spinors}

\author{J. M. Hoff da Silva} 
\email{julio.hoff@unesp.br}
\affiliation{Departamento de F\'isica, Universidade
Estadual Paulista, UNESP, Av. Dr. Ariberto Pereira da Cunha, 333, Guaratinguet\'a, SP,
Brazil.}

\author{R. T. Cavalcanti} 
\email{rogerio.cavalcanti@ufabc.edu.br}
\affiliation{Center of Mathematics, Federal University of ABC, 09210-580, Santo Andr\'e, Brazil.\\ Departamento de F\'isica, Universidade Estadual Paulista, UNESP, Av. Dr. Ariberto Pereira da Cunha, 333, Guaratinguet\'a, SP, Brazil.}

\author{D. Beghetto} 
\email{dino.junior@ifnmg.edu.br}
\affiliation{Instituto Federal do Norte de Minas Gerais.
	Rodovia MG 202, km 392, Sub Trecho: Arinos / Entroncamento de Uruana de Minas, Arinos, MG. Brazil.}

\author{G. M. Caires da Rocha} 
\email{gabriel.marcondes@unesp.br}
\affiliation{Departamento de F\'isica, Universidade
	Estadual Paulista, UNESP, Av. Dr. Ariberto Pereira da Cunha, 333, Guaratinguet\'a, SP,
	Brazil. }



\begin{abstract}
Exotic spinors arise in non-simply connected base manifolds due to the nonequivalent spinor structure. The dynamics of exotic spinors are endowed with an additional differential factor. In this work, we merge the exotic spinor scenario with Cartan's spinor viewpoint, according to which a given spacetime point is understood as a kind of composition of spinor entries. As a result, we arrive at a geometrical setup in which the Minkowski metric is perturbed by elements reflecting the nontrivial topology. Such corrections shall be felt by any physical system studied with the resulting bilinear form. Within the flat spacetime context, we investigate quasinormal modes arising from the interference of nontrivial topology in the scalar field dispersion relation. 
\end{abstract}		
\maketitle
\section{Introduction}

Among the plethora of different descriptions of spinor fields, there is a pretty distinct approach due to \'Elie Cartan \cite{cart}. According to Cartan, spinors are projections of the Riemann sphere (obtained by slicing the light cone at a fixed time) onto the complex plane. This program was also expanded to accommodate all surfaces in Minkowski space \cite{penro}. As a result, every spacetime point could be described by a somewhat spinor components combination. In fact, if $\xi$ and $\eta$ are the spinorial entries, then $(t,x,y,z) = \big(t(\xi,\eta), x(\xi,\eta), y(\xi,\eta), z(\xi,\eta)\big)$ and the spinor entries act as elements of a pre-geometry.  

On the other hand, if the base manifold accommodating physical fields is not simply connected and still the general conditions for the existence of spinors are maintained, then there is more than one spinorial structure in order. Hence the arising of exotic spinors is possible \cite{3,exot}. To a large extent, exotic and usual spinors are indistinguishable, but the exotic spinor dynamics also brings information on the nontrivial topology. Thus, in the case of a multi-connected base manifold, an exotic spinor $\tilde{\psi}$ has dynamics dictated by $(i(d+id\theta)-m)\tilde{\psi}=0$ where $\theta$ is a real scalar function encoding the nontrivial topological information. Exotic spinors effects were investigated in superconductivity \cite{petry} and field theory \cite{field,f2}, in association with mass dimension one fermionic field \cite{nos} and minimal length fermionic systems \cite{so}. Its impacts on heat kernel coefficients were studied in \cite{ozer}. It turns out, however, that the correction of the exotic spinor dynamics takes place in spinor entries. A natural question in trying to merge these two viewpoints is: what spacetime geometry would result from a pre-geometry performed by exotic spinor entries? In other words, if an arbitrary spacetime point is related to (exotic) $\xi$'s and $\eta$'s, whose derivative operator is the one encompassing the correction due to the nontrivial topology, what impact does it have in the spacetime geometry? While these questions are exposed here from a motivational theoretical perspective, it is relevant to note that a signature for a global multiply connected Universe was evinced by the variance of the cosmic microwave background temperature gradient \cite{MC}. 

To approach the previous questions, we start implementing the exotic correction into spacetime differentials and look at the outcome due to the nontrivial topology. We then move to the study of bilinear forms and the accordingly modifications. Although the scenario demands thoughtful treatment, changes in the bilinear form are made reasonably manageable. Thus, with due care, several consequences are evinced. The modified metric carries terms reflecting the nontrivial topology and explicitly depends on the spacetime point; that is what we call `geometrization of topology'. The usual Minkowski symmetries could be seen as approximate low-energy spacetime symmetries. This framework may be, argumentatively, a source for Lorentz violating models \cite{VL}. Also, we argue on a set of approximations on the term reflecting the nontrivial topology such that it is still possible to work in a flat spacetime. These approximations are referred to a given scale, discussed throughout the text, in which the effects of nontrivial topology are reasonably small. However, notwithstanding, signatures of interesting physical consequences do appear. One of the welcome features this approach brings is the possibility of investigating the physical effects of nontrivial topology for fermionic and bosonic fields. We thus study the Klein-Gordon-like equation and compute the resulting dispersion relation. It is shown that the field modes are indeed affected by the nontrivial topology, here performed by a local 'object' causative of non simply connectivity in a given region, e.g., a (higher dimensional) localized hole obstructing closed loops to be contracted back into a dot. Consequently, it leads to an expansion in quasinormal-like modes, usually (but not exclusively) related to the interaction of field modes with the background of a black hole \cite{quasi}. 

This work is organized as follows: Section II presents the mathematical base underlining the main motivations. Section III comprises our main results. It starts implementing the idea and analyzing the primary consequences, also discussing the domain of validity of our procedures; the resulting bilinear form is then investigated, and the study of quasinormal-like behavior is performed via a Fourier transform of the scalar field dynamical equation. In Section IV, we conclude. Finally, in the Appendix, we discuss relevant aspects concerning differential forms and exterior derivatives.

\section{a short review on the mathematical formalism}

This section briefly reviews two relevant views of spinor formalism, both of particular interest to this paper. First, we start recalling the essential steps leading to spacetime points written via spinor entries, after which we move to the exotic spinors. 

\subsection{Spacetime spinorial structure}

We shall explore here a viewpoint whose roots are found in the work of Cartan \cite{cart}. Consider the background given by the Minkowski space, endowed with its usual basis and $v = (t,\,x,\,y,\,z)$ be a light-like vector so that
\begin{equation}
t^2
-x^2
-y^2
-z^2
= 0.
\end{equation}
In this scenario, one can obtain Minkowski's coordinates in terms of spinor's coordinates through a typical geometrical construction. By intercepting the light-cone with a hyper-plane defined by $t=1$. The so-called celestial sphere is now described by $x^2+y^2+z^2=1$. Such a sphere can be brought to a stereographic projection on a complex plane intersecting the sphere on $z=0$. That projection can be constructed by looking at lines that start on the sphere's north pole and reach the complex plane, starting from the north pole and passing through a point $P$ of the sphere. The sphere's coordinates $P=(1,x,\,y,\,z)$ will be taken to complex number $\beta$ given by $\beta=(x+iy)/(1-z)$. To avoid the singularity at the north pole, one can define another set of coordinates by the equation $\beta=\zeta/\chi$, with $\zeta$ and $\chi$ being complex numbers. The pair $(\zeta,\chi)$ equal to $(1,0)$ corresponds to the north pole, regular at infinity. That pair defines what is called a spinor, and Minkowski's coordinates can be gotten \`a la Cartan, written in terms of spinor coordinates,
\begin{equation}\label{}
t=1,\;
x=\frac{\zeta\overline{\chi}+\chi\overline{\zeta}}{\zeta\overline{\zeta}+\chi\overline{\chi}},\;
y=\frac{\zeta\overline{\chi}-\chi\overline{\zeta}}{i(\zeta\overline{\zeta}+\chi\overline{\chi})},\;
z=\frac{\zeta\overline{\zeta}-\chi\overline{\chi}}{\zeta\overline{\zeta}+\chi\overline{\chi}}.\;
\end{equation}
The results presented so far are such that $t=1$. To get a description of a point with any other $t$, one just need to multiply $P$'s coordinates \cite{pm} by $(\zeta\overline{\zeta}+\chi\overline{\chi})/\sqrt{2}$, leading to
\begin{equation}\label{}
t=\frac{\zeta\overline{\zeta}+\chi\overline{\chi}}{\sqrt{2}},\;
x=\frac{\zeta\overline{\chi}+\chi\overline{\zeta}}{\sqrt{2}},\;
y=\frac{\zeta\overline{\chi}-\chi\overline{\zeta}}{i\sqrt{2}},\;
z=\frac{\zeta\overline{\zeta}-\chi\overline{\chi}}{\sqrt{2}}.\;
\end{equation}
The complex numbers $\zeta$ and $\chi$ may be straightforwardly found as  
\begin{equation}
\zeta =\pm\bigg(\tfrac{\sqrt{2}}{2}(t+z)\bigg)^{\!1/2},\;
\chi=\pm\bigg(\tfrac{\sqrt{2}}{2}(t-z)\bigg)^{\!1/2}.\;
\end{equation}\\
Just as in the Euclidean case studied by Cartan \cite{cart}, it is impossible to fix consistent signs for all light-like vectors to obtain a solution that varies continuously with respect to all those light-like vectors. Furthermore, if the vector $v$ is rotated by an angle $\alpha$, $v\mapsto e^{i\alpha}v$, thus the pair $(\zeta,\,\chi) $ will be transformed via $(\zeta,\,\chi)\mapsto(e^{i\tfrac{\alpha}{2}}\zeta,\,e^{i\tfrac{\alpha}{2} }\chi)$. In particular, if $\alpha=2\pi$, the vector $v$ returns to its original version, and $(\zeta,\,\chi)$ receives a negative sign.

Notice that the following product 
\begin{equation}\label{Def_V}
\begin{pmatrix}
\zeta\\
\chi
\end{pmatrix}
\begin{pmatrix}
\bar{\zeta}&\bar{\chi}
\end{pmatrix}
=
\begin{pmatrix}
\zeta\bar{\zeta}&\zeta\bar{\chi}\\
\chi\bar{\zeta}&\chi\bar{\chi}
\end{pmatrix}
=
\frac{1}{\sqrt{2}}
\begin{pmatrix}
t+z&x+iy\\
x-iy&t-z
\end{pmatrix}
\equiv
V,
\end{equation}
is such that $\det(V)
=
t^2
-x^2
-y^2
-z^2$. Hence, it is possible to introduce complex $2\times 2$ matrices, say $\lambda$, which perform Lorentz transformations through the conjugation
\begin{equation}\label{eq_trans_V}
V\mapsto V'
=
\lambda
V
\lambda^{\dagger},
\end{equation}
where the metric invariance (the invariance of $\det(V)$) is attained requiring $|\det(\lambda)|=1$. Therefore, the $\lambda$ matrices will be elements of the $SL(2,\mathbb{C})$ group. Then, it can be seen that if we replace (\ref{Def_V}) in (\ref{eq_trans_V}), we have
\begin{equation}
\lambda
V
\lambda^{\dagger}
=
\lambda
\begin{pmatrix}
\zeta\\
\chi
\end{pmatrix}
\begin{pmatrix}
\bar{\zeta}&\bar{\chi}
\end{pmatrix}
\lambda^{\dagger},
\end{equation}
from which $\lambda$ is regarded as a spinorial transformation, so that $\lambda \in Spin(1,3)$ and $Spin(1,3)\cong SL(2,\mathbb{C})$ . In addition, as $-\lambda$ and $\lambda$ perform the same effect in preserving the metric, a classical result states that $SL(2,\mathbb{C} )/\mathbb{Z}_2 \cong SO(1,3)$.
 
\subsection{Exotic Spinor structures}

What follows is a brief discussion concerning the construction of exotic spinors. We shall do this by comparing it with the construction of non-exotic spinors. One must notice that the spacetime topology will be a major condition for the exotic spinors to exist. First, let us define a spin structure on a 4-dimensional Lorentzian spacetime manifold $M$. Let $P_{Spin(1,3)} \xrightarrow{\pi_s} M$ be an orthogonal frame bundle, and $s: P_{Spin(1,3)} \rightarrow P_{SO(1,3)}$ a double cover such that $\pi_s = \pi \circ s$ for $\pi : P_{SO(1,3)} \rightarrow M$. A spin structure is a pair formed by the principal bundle and the double cover, $\left(P_{Spin(1,3)}, s\right)$.

Regarding the existence of spin structures, one has to investigate the so-called Stiefel-Whitney classes $w_i(E) \in H^i(B,\mathbb{Z}_2)$, defined for a real vector bundle $E \xrightarrow{\phi} B$. The vector spaces $H^i(B,\mathbb{Z}_2)$ are referred to as the cohomology groups of $B$ with coefficients in $\mathbb{Z}_2$. Their elements (the Stiefel-Whitney classes) are called characteristic classes. They are related to invariants on vector bundles and generate all the ordinary cohomology classes with coefficients in $\mathbb{Z}_2$. In particular, the Stiefel-Whitney second class establishes the existence of spin structures on manifolds: a Riemannian manifold admits spin structure if, and only if, its Stiefel-Whitney second class is null \cite{mil, hatcher,3}. The set of spin structures on a manifold $M$ is labeled by the elements of the first cohomology group $H^1(M,\mathbb{Z}_2)$. When the manifold $M$ is non-simply connected, i.e., $H^1(M,\mathbb{Z}_2) \neq 0$, exotic spin structures $\left(\tilde{P}_{Spin(1,3)}, \tilde{s}\right)$ are allowed to exist. Such spin structures are inequivalent to the usual ones. Consequently, exotic spinors emerge as sections of the spinor bundle associated with the principal bundle $\tilde{P}_{Spin(1,3)}$.  Finally, one concludes that the topological aspects of the spacetime manifold lie behind the very birth of exotic spinors. We shall pursue this train of thought.

Let us start by defining the two spin structures $P:=(P_{Spin(1,3)}, s)$ and $\tilde{P}:=(\tilde{P}_{Spin(1,3)}, \tilde{s})$. Then, $P$ and $\tilde{P}$ are called equivalents if there exists a $Spin(1,3)-$equivariant mapping $q:P \rightarrow \tilde{P}$ so that the following diagram commutes:
\begin{equation}
\xymatrix{
	P \ar[rdd]_s \ar[rr]^q & & \tilde{P} \ar[ldd]^{\tilde{s}}\nonumber \\
	& & \\
	& P_{SO(1,3)}
}
\end{equation}
Consider the group homomorphism $\varsigma: Spin(1,3) \rightarrow SO(1,3)$ such that $\ker(\varsigma) \cong \mathbb{Z}_2$. Let $\cup_{i \in I} U_i$ be an open cover for $M$, with transition functions defined as
\begin{equation}
a_{ij}: U_i \cap U_j \rightarrow SO(1,3),
\end{equation}
such that $a_{ij} \circ a_{jk} = a_{ik}$ on $U_i \cap U_j \cap U_k$. For a spin structure $P$ on $M$, there is \cite{petry,nos} a system of transition functions
\begin{equation}
h_{ij}: U_i \cap U_j \rightarrow Spin(1,3),
\end{equation}
such that
\begin{equation}
\varsigma \circ h_{ij} = a_{ij};\;\; h_{ij} \circ h_{jk} = h_{ik}.
\end{equation}
Similarly, one has
\begin{equation}
\tilde{h}_{ij}: U_i \cap U_j \rightarrow \widetilde{Spin}(1,3) = Spin(1,3).
\end{equation}
What follows is that two spin structures $P$ and $\tilde{P}$ are respectively described by the maps $h_{ij}$ and $\tilde{h}_{ij}$, such that $\varsigma \circ h_{jk} = a_{jk} = \varsigma \circ \tilde{h}_{jk}$.

Now, define $\xi: M \rightarrow \mathbb{C}$ by
\begin{equation}
\xi(x) = \xi_i^2(x),
\end{equation}
where $\xi_i: U_i \rightarrow \mathbb{C}$ are specific unimodular functions satisfying $\xi_i(x) \in U(1)$ for each $x \in U_i \subset M$. These unimodular functions are called generators of the cocycles $c_{ij}$, which are maps defined by $h_{ij}(x) = \tilde{h}_{ij}(x) c_{ij}$, such that
\begin{equation}
c_{ij}: U_i \cap U_j \rightarrow \ker(\varsigma)=\mathbb{Z}_2 \hookrightarrow Spin(1,3),
\end{equation}
with $c_{ij} \circ c_{jk} = c_{ik}$ (the nontrivial elements of $H^1(M,\mathbb{Z}_2) \neq 0$, in fact). This construction defines a one-to-one correspondence between inequivalent spin structures and $H^1(M,\mathbb{Z}_2)$ \cite{3}. Considering a non-exotic spinor $\Psi \in \text{sec} P_{Spin(1,3)} \times \mathbb{C}^4$ and an exotic spinor $\tilde{\Psi} \in \text{sec} \tilde{P}_{Spin(1,3)} \times \mathbb{C}^4$, one can define a bundle mapping $f$ as
\begin{eqnarray}
f: & \tilde{P}_{Spin(1,3)} \times \mathbb{C}^4 & \rightarrow P_{Spin(1,3)} \times \mathbb{C}^4 \nonumber \\
& \tilde{\Psi}_i & \mapsto f(\tilde{\Psi}_i) = \Psi_i,
\end{eqnarray}
such that \cite{petry,SC,f2,nos}
\begin{equation}\label{derivadacovariante}
\tilde{\nabla}_X f(\tilde{\Psi}) = f(\nabla_X \tilde{\Psi}) + \frac{1}{2} \left( X \cdot (\xi^{-1} d\xi) \right) f(\tilde{\Psi})
\end{equation}
is applied\footnote{Generally, in (\ref{derivadacovariante}), left contraction is necessary for the second term. Here, the inner product will be used as it suffices for our purposes.} to all $\Psi \in \mathrm{sec}P_{Spin(1,3)} \times \mathbb{C}^4$ and all vector field $X \in M$. Therefore, even in the flat manifold case, a correction in the derivative is expected, and a replacement as   
\begin{equation}
\partial_\mu\mapsto \partial_\mu+\xi^{-1}(x)\partial_\mu \xi(x),\label{nvf1}
\end{equation} is in order. 

\section{consequences of the nontrivial topology}

This section aims to implement a topological correction in the spacetime geometry, motivated by merging the Cartan spinor view with exotic counterparts. As stated before, when the topology is nontrivial, the spinor dynamic is corrected, which means that the Dirac operator shall be changed. Nevertheless, the correction acts fundamentally in spinor entries.

Let us take a glance at the usual differentiation case for a function $f:\mathbb{R}^{n}\rightarrow \mathbb{R}$. Let $h=\sum\limits_{i}h^{i}e_{i}\in\mathbb{R}^{n}$ and $x_{0}\in U \subset \mathbb{R}^{n}$, where $U$ is a given open set. The differential of $f$ in $x_0$ acting in $h$ reads
\begin{equation}\label{df}
df(x_{0})(h)
=
df(x_{0})\bigg(\sum\limits_{i}h^{i}e_{i}\bigg)
=
\sum\limits_{i}df(x_{0})(e_{i})h^{i}
=
\sum\limits_{i}\frac{\partial f(x_{0})}{\partial x^{i}}h^{i},
\end{equation} as, naturally, $df(x_{0})(e_{i})=\partial f(x_{0})/\partial x^{i}$.  Employing the linear orthogonal projections $\pi^i$ 
\begin{equation}\label{proj}
\begin{split}
\pi^{i}:\mathbb{R}^{n}\to\mathbb{R}\qquad\qquad\qquad\quad\,\,\\
(x^1,\cdots,x^{i},\cdots,x^{n})\mapsto\pi^{i}(x^1,\cdots,x^{i},\cdots,x^{n})=x^{i},
\end{split}\!\!\!\!\!\!\!\!\!\!\!\!\!\!\!\!
\end{equation} one has $h^i=dx^i(h)=d\pi^i(h)$, so that for an arbitrary $h\in\mathbb{R}^n$ the final differential result is the quite familiar one 
\begin{equation}\label{dff}
df(x_{0})=\sum\limits_{i}\frac{\partial f}{\partial x^{i}}(x_{0})dx^{i}.
\end{equation}
 
The implementation of nontrivial topology goes as follows: we shall deal with the possibility of a multiply connected open set $U$. This concept was done more precisely in the last section. Here we shall expose the main ideas simplifying the presentation and eventually making it more formal as the necessity appears. By now, we only emphasize the notation $\widetilde{\mathbb{R}^n}$ for a space similar to $\mathbb{R}^n$ but encompassing at least one open set with nontrivial topology. As stated before, spacetime points may be faced as spinor entries condensation so that\footnote{We kept the Latin index for a while, but all the discussion here may be straightforwardly generalized to a pseudo-Euclidean spacetime $(n=p+q)$. The usual tensorial notation will be introduced during the presentation when discussing bilinear forms.} $x^i\sim (\zeta\bar{\zeta})^i$. In this vein, the orthogonal projections are such that
\begin{equation}\label{dpi}
d\pi^{i}=d(\overline{\zeta}\zeta)^{i}=\sum_j\frac{\partial}{\partial x_j}(\overline{\zeta}\zeta)^{i}dx^j,    
\end{equation} leading, by its turn, to 
\begin{equation}\label{dpin}
d\pi^{i}=\sum_j(\overline{\zeta}\partial_j\zeta+\partial_j\overline{\zeta}\zeta)^{i}dx^j.
\end{equation}

This factorization makes explicit the fact that the differentiation is taken over spinorial entries; therefore, we now implement a correction on the partial derivative operator motivated by the previous section's construction setting  $\partial_j\mapsto \partial_j+\partial_j \theta(x)$, with $\theta(x)\in\mathbb{R}$. By implementing it into (\ref{dpin}), the coefficient reads  
\begin{equation}\label{}
\bigg(\overline{\zeta}[(\partial_j+\partial_j\theta)\zeta]
+
[(\partial_j+\partial_j\theta)\overline{\zeta}]
\zeta
\bigg)^{i}
=
\bigg(
\overline{\zeta}\partial_j\zeta
+
\partial_j\theta\overline{\zeta}\zeta
+
\partial_j\overline{\zeta}\zeta
+
\partial_j\theta\overline{\zeta}\zeta
\bigg)^{i}
=
\partial_j(\overline{\zeta}\zeta)^{i}
+
(\overline{\zeta}\zeta)^{i}\partial_j\theta,
\end{equation} 
where the numerical factor $2$ was already absorbed into the $\theta$ function. Hence, Eq. (\ref{dpin}) amounts out to 
\begin{eqnarray}
d\pi^{i}=\sum_j\Big(\partial_j(\overline{\zeta}\zeta)^{i}+(\overline{\zeta}\zeta)^{i}\partial_j\theta\Big)dx^j=dx^{i}+x^{i}d\theta.
\end{eqnarray}

Inserting this last expression back into (\ref{dff}), we are left with 
\begin{equation}\label{df corrigido}
df(x_{0})
=
\sum\limits_{i}
\frac{\partial f}{\partial x^{i}}(x_{0})dx^{i}
+
\sum\limits_{i}\frac{\partial f}{\partial x^{i}}(x_{0})x^{i}d\theta.
\end{equation} 
The first term in (\ref{df corrigido}) is the usual one, while the second term is a direct consequence of nontrivial topology. We shall keep the standard basis for the dual $(\widetilde{\mathbb{R}^n})^*$ space and see how $\theta$ terms impact the coefficients. Here we further notice that ultimately $\partial\theta$ corrections take place in complex functions, coordinated by $\{x^\mu\}$. Therefore, when understood under the point of view of a covariant vector field, it is conceivable to write $d\theta$ in the coordinate basis $d\theta=\sum_{j=1}^n\partial_j\theta dx^j$. Thus, fairly direct exercise leads (\ref{df corrigido}) to 
\begin{equation}\label{dep}
df(x_{0})
=
\sum\limits_{i}
\bigg\{
\frac{\partial f}{\partial x^{i}}(x_{0})
+
\frac{\partial \theta}{\partial x^{i}}\big(
\vec{\nabla}f(x_{0})\cdot\vec{x}
\big)
\bigg\}dx^{i}.
\end{equation} 

Notice that, despite $df$ being calculated in $x_0\in U$, there is explicit dependence on the coordinates. We shall discuss this point, along with the variation of $\theta$, in the following. Firstly, we emphasize that if the topology is trivial, or $\partial\theta$ can be ignored, then the usual case is recovered, and no topological ``dilatation'' is found. In a generic vector, say $\phi$, this dilatation also acts 
\begin{equation}\label{aaa}
\phi
=
\sum\limits_{i}\phi_{i}(dx^{i}+x^{i}d\theta)
=
\sum\limits_{i}\Bigg(
\phi_{i}+\Big(\sum_k\phi_k x^k\Big)\partial_{i}\theta
\Bigg)dx^{i}. 
\end{equation} This last expression is sufficient to evince the peculiarity of dealing with nontrivial topology. Consider, for example, the case in which $\phi$ would have only $z$ coordinate in the Cartesian system with trivial topology. In such context, also consider that $\theta$ is a function of $x$ only. Even in this simple case, if the topological non-triviality cannot be ignored (non-negligible $\partial \theta(x)/\partial x\equiv \theta'(x)$), the vector $\phi$ acquires an extra component $\phi=\phi_zdz+\phi_zz\theta'(x)dx$. 

Before going further, it is important to revisit a well-known standard result in linear algebra, calling attention to a peculiarity in the formulation presented here.
\begin{prop}\label{prop3.1}
Let $\{e_i\}_{i=1\cdots n}$ be a base of a real vector space $\widetilde{\mathbb{R}^n}$ of dimension $n$. The linear applications $\{\varepsilon^i\}_{i=1,\cdots,n}$ such that $\varepsilon^{i}(e_{j})=\delta^{i}_{\;j}$ (as $dx^i$) are base of $(\widetilde{\mathbb{R}^n})^*$, provided the nontrivial topology is such that $\partial_{j}\theta\neq\frac{-\phi_j}{\sum_k\phi_k x^k}$ (for $\phi\neq 0$).
\end{prop}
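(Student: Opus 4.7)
The plan is to recognize this as the standard dual-basis statement packaged with a non-degeneracy condition that guarantees the classical argument survives the topological deformation. I would proceed in three moves: establish linear independence of the $\varepsilon^i$, establish spanning, and then isolate the precise place where the proviso on $\partial_j \theta$ enters.

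For linear independence, I would invoke the usual duality trick: assume $\sum_i c_i \varepsilon^i = 0$, apply both sides to $e_j$, and use $\varepsilon^i(e_j) = \delta^i_{\;j}$ to deduce $c_j = 0$ for every $j$. For spanning, given any $\omega \in (\widetilde{\mathbb{R}^n})^*$ I would form $\omega' := \sum_i \omega(e_i)\varepsilon^i$, verify $\omega'(e_j) = \omega(e_j)$ on the basis, and conclude $\omega = \omega'$ by linearity; combined with $\dim (\widetilde{\mathbb{R}^n})^* = n$ this forces $\{\varepsilon^i\}$ to be a basis. Both steps are purely algebraic and do not yet feel the topological correction, since the pairing is fixed abstractly by the Kronecker $\delta$.

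The role of the proviso becomes visible only when one identifies the $\varepsilon^i$ with the coordinate 1-forms $dx^i$, because the natural geometric 1-forms emerging from the exotic construction are instead the corrected differentials $d\pi^i = dx^i + x^i d\theta$. By Eq.~(\ref{aaa}), a generic covector expanded as $\phi = \sum_i \phi_i d\pi^i$ rewrites in the $\{dx^i\}$ system as $\sum_i \bigl(\phi_i + (\sum_k \phi_k x^k)\partial_i\theta\bigr)\,dx^i$. For this change-of-basis map $(\phi_i) \mapsto \bigl(\phi_i + (\sum_k \phi_k x^k)\partial_i\theta\bigr)$ to be injective—so that no nonzero $\phi$ collapses to zero when expressed in the $\varepsilon^i$-basis—one must forbid the simultaneous vanishing of all the coefficients $\phi_i + (\sum_k \phi_k x^k)\partial_i\theta$ for some $\phi \neq 0$, which is precisely the stated condition $\partial_i\theta \neq -\phi_i/\sum_k \phi_k x^k$.

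The main obstacle is conceptual rather than computational: one must reconcile two natural sets of 1-forms, the standard coordinate basis $\{dx^i\}$ and the topologically corrected $\{d\pi^i\}$, which coexist in $(\widetilde{\mathbb{R}^n})^*$. The proposition asserts that the transition between them, encoded by the rank-one modification $M^i{}_j = \delta^i_{\;j} + x^i \partial_j \theta$, remains invertible exactly when $\theta$ does not realize the forbidden relation; under this proviso, the classical dual-basis argument transports verbatim into the deformed setting.
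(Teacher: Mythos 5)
Your proposal is correct and follows essentially the same route as the paper: the standard dual-basis argument (independence by evaluation on $e_j$, spanning by matching on the basis), followed by the observation that the proviso on $\partial_j\theta$ is exactly what makes the coefficient map $\phi_i \mapsto \phi_i + (\sum_k \phi_k x^k)\partial_i\theta$ from Eq.~(\ref{aaa}) injective, so that vanishing of the $\varepsilon^i$-coefficients forces $\phi = 0$. Your identification of this map as the rank-one modification $\delta^i_{\;j} + x^i\partial_j\theta$ is a slightly sharper packaging than the paper's phrasing, but the content is the same.
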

\begin{proof}
Broadly, the proof is usually found in linear algebra textbooks: 1) if $\sum\limits_{i}\alpha_{i}\varepsilon^{i}=0$ then $\sum\limits_{i}\alpha_{i}\varepsilon^{i}(v)=0$ $\forall v\in \widetilde{\mathbb{R}^n}$; particularly when $v=e_j$ we are lead to $\alpha_j=0$ and the independence linear is assured. 2) As $\phi(v)=\sum_i\alpha_i\varepsilon^i(\sum_jv^je_j)=\sum_i\alpha_iv^i$ and $\varepsilon^j(v)=v^j$, we have $	\phi(v)=\sum\limits_{i}\alpha_{i}\varepsilon_{i}(v)$ for all $v$, that is $(\widetilde{\mathbb{R}^n})^*$ is generated by $\{\varepsilon_{i}\}$.

The novelty in this formulation is the specificity arising in part 1), since the coefficient $\alpha_j$ is given by (\ref{aaa}) $\alpha_{j}=\phi_{j}+(\sum_k\phi_k x^k)\partial_{j}\theta$. As linear independence requires $\alpha_j=0$ $\forall j$, in order not to restrict the topology, it is necessary and sufficient that  $\partial_{j}\theta\neq\frac{-\phi_j}{\sum_k\phi_k x^k}$ (for $\phi\neq 0$). Therefore $\phi_{j}+(\sum_k\phi_k x^k)\partial_{j}\theta=0$ implies $\phi_j=0$ $\forall j$, just the usual condition for linear independence. Finally, if $\sum_k\phi_k x^k=0$, then $\phi_j=0$ trivially.  
\end{proof}	   

The inner product is also peculiar, since 
\begin{equation}\label{pd}
\phi(v)=\sum_k \phi^k v_k+\bigg(\sum_i \phi^i x_i\bigg)\bigg(\sum_j \partial^j\theta v_j\bigg),
\end{equation} also reveals the influence of topological dilatation. In particular, note that when $\phi$ and $x$ are orthogonal, the usual case is reached even with nontrivial topology. This fact suggests the decomposition of $(\widetilde{\mathbb{R}^n})^*$ as 
\begin{equation}\label{dec}
(\widetilde{\mathbb{R}^{n}})^{*}
=
(\mathbb{R}^{n})^{*}_{\perp}
\oplus
(\widetilde{\mathbb{R}^{n}})^{*}\backslash(\mathbb{R}^{n})^{*}_{\perp},
\end{equation} where $(\mathbb{R}^{n})^{*}_{\perp}$ is the vector space encompassing $\{\phi\}$ such that $\sum_i\phi_i x^i=0$. Unaffected, therefore, by the nontrivial topology. 

After seeing the consequences in vector and dual spaces, it is relevant to further investigate an eventual connection between them by inspecting the metric within this context.

\subsection{The bilinear form}  

From this section on, we shall use Einstein's notation explicitly. The metric $\tilde{\eta}$ of $(\widetilde{\mathbb{R}^{1+3}})^*\otimes (\widetilde{\mathbb{R}^{1+3}})^*$ reads 
\begin{equation}\label{metr}
\tilde{\eta}=\eta_{\mu\nu}(dx^{\mu}+x^{\mu}d\theta)\otimes(dx^{\nu}+x^{\nu}d\theta),
\end{equation} where $\eta_{\mu\nu}$ is the usual Minkowski metric. It allows treating the nontrivial topology effects as corrections upon the usual case. By linearity of the tensor product, Eq. (\ref{metr}) amounts out to 
\begin{equation}\label{(2)}
\tilde{\eta}=
\eta_{\mu\nu}dx^{\mu}\otimes dx^{\nu}
+
\eta_{\mu\nu}x^{\nu}\partial_{\beta}\theta dx^{\mu}\otimes dx^{\beta}
+
\eta_{\mu\nu}x^{\mu}\partial_{\alpha}\theta dx^{\alpha}\otimes dx^{\nu}
+
\eta_{\mu\nu}x^{\mu}x^{\nu}\partial_{\alpha}\theta\partial_{\beta}\theta dx^{\alpha}\otimes dx^{\beta}.
\end{equation} Again, as the tensorial product is the standard one, $\tilde{\eta}$ is bilinear. We shall raise and lower indexes with the standard Minkowski metric. More than convenience, to serve as an isomorphism, $\tilde{\eta}$ would have to be non-degenerate, a property that is not always guaranteed, as we shall see in a moment.

\begin{prop}\label{propo 2}   
The bilinear form $\tilde{\eta}$ given by Eq. (\ref{metr}) is symmetric. 
\end{prop}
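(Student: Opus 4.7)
The plan is to exploit the manifestly tensorial structure of Eq.~(\ref{metr}): $\tilde{\eta}$ is constructed as the contraction of the (symmetric) Minkowski metric $\eta_{\mu\nu}$ against two copies of the \emph{same} one--form $A^{\mu}:=dx^{\mu}+x^{\mu}d\theta$. Symmetry of $\tilde{\eta}$ should therefore reduce to the symmetry of $\eta_{\mu\nu}$ after a renaming of dummy indices, with no hypotheses on the topological function $\theta$.

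Concretely, I would first give a short basis--free argument: for arbitrary vectors $u,v\in \widetilde{\mathbb{R}^{1+3}}$,
\[
\tilde{\eta}(u,v)=\eta_{\mu\nu}\,A^{\mu}(u)\,A^{\nu}(v)
=\eta_{\nu\mu}\,A^{\mu}(u)\,A^{\nu}(v)
=\eta_{\mu\nu}\,A^{\mu}(v)\,A^{\nu}(u)=\tilde{\eta}(v,u),
\]
where the first equality is the definition (\ref{metr}), the second uses $\eta_{\mu\nu}=\eta_{\nu\mu}$, and the third is the relabeling $\mu\leftrightarrow\nu$. This already closes the proof.

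As a sanity check, and to produce a formula useful for later sections of the paper, I would then read off the components $\tilde{\eta}_{\alpha\beta}$ in the basis $\{dx^{\alpha}\otimes dx^{\beta}\}$ directly from the expansion (\ref{(2)}), obtaining
\[
\tilde{\eta}_{\alpha\beta}
=\eta_{\alpha\beta}
+x_{\alpha}\,\partial_{\beta}\theta
+x_{\beta}\,\partial_{\alpha}\theta
+x_{\mu}x^{\mu}\,\partial_{\alpha}\theta\,\partial_{\beta}\theta,
\]
with $x_{\alpha}:=\eta_{\alpha\nu}x^{\nu}$. The first and fourth terms are manifestly $\alpha\leftrightarrow\beta$ symmetric, while the second and third map to each other under the swap, confirming $\tilde{\eta}_{\alpha\beta}=\tilde{\eta}_{\beta\alpha}$.

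There is no real obstacle here; the only point demanding care is keeping the dummy--index bookkeeping consistent when passing from Eq.~(\ref{metr}) to the expanded form (\ref{(2)}) and back. I would underline, however, that this symmetry statement is \emph{not} accompanied by non--degeneracy, anticipated in the paragraph preceding Proposition~\ref{propo 2}: that weaker property will require separate analysis, since $\tilde{\eta}_{\alpha\beta}$ may well fail to be invertible for particular choices of $\theta$ and of the point where it is evaluated.
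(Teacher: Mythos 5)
Your proof is correct and takes essentially the same route as the paper's: both verify symmetry by letting $\tilde{\eta}$ act on two arbitrary vectors and observing that, since the form is the symmetric Minkowski metric contracted with two copies of the same one-form $dx^{\mu}+x^{\mu}d\theta$, the result is invariant under swapping the arguments. Your basis-free phrasing is marginally more compact than the paper's term-by-term expansion of Eq.~(\ref{(2)}), and your component formula reproduces the paper's Eq.~(\ref{base}), so there is nothing substantive to add.
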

\begin{proof}
The form $\tilde{\eta}:\widetilde{\mathbb{R}^{1+3}}\times \widetilde{\mathbb{R}^{1+3}}\rightarrow \mathbb{R}$ when acting upon $v=v^\mu e_\mu$ and $\omega=\omega^\nu e_\nu$ (vectors of $\widetilde{\mathbb{R}^{1+3}}$) gives 
	\begin{multline}\label{}
	\tilde{\eta}(v,\,\omega)=
	\eta_{\mu\nu}dx^{\mu}\otimes dx^{\nu}(v^{\sigma}e_{\sigma},\,\omega^{k}e_{k})
	+
	\eta_{\mu\nu}x^{\nu}\partial_{\beta}\theta dx^{\mu}\otimes dx^{\beta}(v^{\sigma}e_{\sigma},\,\omega^{k}e_{k})
	+\\+
	\eta_{\mu\nu}x^{\mu}\partial_{\alpha}\theta dx^{\alpha}\otimes dx^{\nu}(v^{\sigma}e_{\sigma},\,\omega^{k}e_{k})
	+
	\eta_{\mu\nu}x^{\mu}x^{\nu}\partial_{\alpha}\theta\partial_{\beta}\theta dx^{\alpha}\otimes dx^{\beta}(v^{\sigma}e_{\sigma},\,\omega^{k}e_{k}),
	\end{multline} which results in 
	\begin{equation}
	\tilde{\eta}(v,\,\omega)=
	v^\mu\omega_\mu
	+
	(x^\mu v_\mu)(\partial_\beta\theta\omega^\beta)
	+
	(x^\mu\omega_\mu)(\partial_\beta\theta v^\beta)
	+
	x^{2}(\partial_\beta\theta\omega^\beta)(\partial_\kappa\theta v^\kappa)
	=
	\widetilde{\eta}(\omega,\,v).
	\end{equation}
\end{proof}	

In order to discuss non-degeneracy, let us compute $\tilde{\eta}(v^{\sigma}e_{\sigma},\,\omega)$ for an arbitrary vector $\omega\in \widetilde{\mathbb{R}^{1+3}}$:
\begin{equation}\label{o1}
\widetilde{\eta}(v^{\sigma}e_{\sigma},\,\omega)
=
\eta_{\mu\nu}
\bigg\{
(v^{\mu}+x^{\mu}\partial_{\sigma}\theta v^{\sigma})
dx^{\nu}
+
x^{\nu}\partial_{\beta}\theta
(v^{\mu}+x^{\mu}\partial_{\sigma}\theta v^{\sigma})
dx^{\beta}
\bigg\}(\omega),
\end{equation} or in a more compact form
\begin{equation}\label{o2}
\widetilde{\eta}(v^{\sigma}e_{\sigma},\,\omega)
=
(v_{\nu}+x_{\nu}\partial_{\sigma}\theta v^{\sigma})
\big(
dx^{\nu}+x^{\nu}d\theta
\big)(\omega).
\end{equation} 
The non-degeneracy condition states that if $\eta(v,\,\omega)=0, \;\forall\omega\in\widetilde{\mathbb{R}^{1+3}}$ then necessarily $v=0$. From (\ref{o2}) we see that if $\tilde{\eta}(v,\,\omega)=0,\;\forall\omega\in\widetilde{\mathbb{R}^{1+3}}$, then $v_{\nu}+x_{\nu}\partial_{\sigma}\theta v^{\sigma}=0$. Certainly, when $v=0$, the usual condition is reached (here as a sufficiency), but it is not necessary. Of course, one could add another constraint to the topological scenarios and work with $\tilde{\eta}$ non-degenerate. It may be assumed as the necessity appears (see the Appendix section). For a general discussion, we just reinforce that if $v_{\nu}+x_{\nu}\partial_{\sigma}\theta v^{\sigma}=0$, then $\tilde{\eta}$ is degenerate and some vectors in $\widetilde{\mathbb{R}^{n}}$ may not have counterpart in $(\widetilde{\mathbb{R}^{n}})^*$. Therefore we shall keep the conservative approach of using the Minkowski metric as the isomorphic bridge between these two vector spaces.  

Let us now explore the special case $\tilde{\eta}(v,v)$. As it can be readily verified from (\ref{(2)}), we have 
\begin{equation}\label{o3}
\tilde{\eta}(v,\,v)=
v^{2}+x_{\mu}v^{\mu}\partial_{k}\theta v^{k}
+
x_{\nu}v^{\nu}\partial_{\sigma}\theta v^{\sigma}+x^{2}\partial_{\sigma}\theta v^{\sigma}\partial_{k}\theta v^{k},
\end{equation} or in a more familiar form
\begin{equation}
\tilde{\eta}(v,\, v)=(v^\mu+x^\mu\partial_\alpha\theta v^\alpha)(v_\mu+x_\mu\partial_\alpha\theta v^\alpha)
\end{equation} and, in light of the above discussion, there are no light-like vectors for a degenerate $\tilde{\eta}$. For a non-degenerate $\tilde{\eta}$, however, there are vectors for which $\tilde{\eta}(v,\, v)=0$. Let us explore this case further by taking advantage of a simplified scheme where $v^\mu=\Delta x^\mu=(c\Delta t,-\Delta x)$ denotes spacetime displacements in two dimensions, and exceptionally for this example, we do not take natural units. A direct evaluation of the metric leads to 
\begin{equation}\label{ex1}
\widetilde{\Delta s}^2=\Delta x^\mu\Delta x_\mu+2x_\mu \Delta x^\mu \partial_\alpha\theta\Delta x^\alpha+ x_\mu x^\mu(\partial_\alpha\theta\Delta x^\alpha)^2.
\end{equation} So far, we have made no considerations about the nontrivial topology. This generality plays a hole here: it stands for a comprehensive analysis and highlights effects to occur independently of the class of nontrivial topology. Nevertheless, this concept could also be benefited from a more systematic approach, borrowing concepts from differential geometry as the geometrical characterization of horizons \cite{bhs}, for instance. Heretofore, we shall explore the case in which it (and its effects) are localized in a given finite region, say $\mathcal{R}$. In such context, it is generally expected that the nontrivial topology affects physical systems in a given neighborhood $\mathcal{V}$ of $\mathcal{R}$ ($\mathcal{V}\supset\mathcal{R}$) so that the net effect perturbs the usual case in $\mathcal{V}\backslash\mathcal{R}$ but may be neglected in other domains of the spacetime. While we shall deal with the region of analysis in a moment, for a typical and concrete example for this general discussion, one could understand $\mathcal{R}\simeq \mathbb{R}\times(\mathbb{R}^2\times S^1)$ and $\mathcal{V}\backslash\mathcal{R}\simeq \widetilde{\mathbb{R}^{1+3}}$. Back to our discussion, simple requirements on the function encoding nontrivial topology effects may implement this locality aspect. We consider that, in the suitable region within $\widetilde{\mathbb{R}^{1+3}}$, the first $\theta(x)$ derivative shall be small. From this perspective, we can neglect quadratic (even mixed) derivative terms in this example, and Eq. (\ref{ex1}) reads 
    
\begin{equation}\label{ex2}
\widetilde{\Delta s}^2=c^2\Delta t^2\bigg\{1+2t\dot{\theta}-\frac{u^2}{c^2}(1-2x\theta')-\frac{2u}{c}(t\theta'c^2+\dot{\theta}x)\bigg\}, 
\end{equation} where $u\equiv\Delta x/\Delta t$ is the displacement velocity, $\dot{\theta}$ denotes time derivative of $\theta$ and $\theta'$ stands for its derivative with respect to $x$. From (\ref{ex2}) we see that $\tilde{\eta}(v,v)=\tilde{ds}^2=0$ may be solved for $u$ leading to\footnote{In the following subsection, we elaborate on the spacetime region for which the approximations here performed may be safely taken. It is done to the case at hand by considering a spacetime region such that $||r||_{\max}\partial\theta \ll 1$, $r=t,x$, for constant $\partial\theta$.} 
\begin{equation}
u\approx \pm c-\dot{\theta}(x\mp ct)-c\theta'(ct\mp x).
\end{equation} This last expression, even for this simplified case, is still interesting for two related points: the ``light-like'' case here does not mean displacements at the velocity of light. The light cone is disturbed due to topological effects. However, when $\dot{\theta}\sim 0 \sim \theta'$ and the topology is (or maybe treated as) trivial, $u=\pm c$ as expected.

Returning to a less prosaic notation, from (\ref{o3}), we remark that even in the case that $v^2=0$ (a light-like vector in the standard, trivial topology, nomenclature), we are left with 
\begin{equation}\label{o4}
\tilde{\eta}(v,\,v)=2(x_\mu v^\mu)(\partial_\alpha\theta v^\alpha)+x^{2}(\partial_\alpha\theta v^\alpha)^{2}
\end{equation} and not imposing additional constraints in the topology, there are exceptional cases for which a standard light-like vector $v$ is also a light-like vector within the context of nontrivial topology, namely: i) for vectors such that $v_\alpha\partial^\alpha\theta=0$ or ii) $x_\alpha v^\alpha=0$ and $x$ is itself light-like in the standard nomenclature. Notice that in the less restrictive case i), the set of vectors belongs to $(\mathbb{R}^{n})_{\perp}$ (see discussion around (\ref{dec})), as expected. 

At this point, it is clear that usual spacetime transformations are not in general symmetries of $\widetilde{\mathbb{R}^{1+3}}$, since the presence of $\theta$ terms in the metric jeopardizes the very idea of inertial frames. However, some discussion is in order. From (\ref{(2)}), notice that, acting upon a basis, $\tilde{\eta}$ gives
\begin{equation}\label{base}
\tilde{\eta}(e_{\alpha},\,e_{\beta})=
\eta_{\alpha\beta}+x_{\alpha}\partial_{\beta}\theta+x_{\beta}\partial_{\alpha}\theta+x^{2}\partial_{\alpha}\theta\partial_{\beta}\theta\equiv\widetilde{\eta}_{\alpha\beta},
\end{equation} from which the simple form $\tilde{\eta}=\tilde{\eta}_{\alpha\beta}dx^\alpha\otimes dx^\beta$ is reached. The derivative of $\theta$ entering into $\tilde{\eta}_{\alpha\beta}$ would naturally lead to the appreciation of a curved, not flat, spacetime. While pursuing this line of research seems feasible, we shall postpone it to future work (apart from a comment in the Conclusion section). By now, to further explore flat spacetime consequences of this construction, we shall adopt the following particularization: we said that, in general, we are treating $\partial\theta$ as an allegedly small correction so that we may disregard $(\partial\theta)(\partial\theta)$ terms. Moreover, we shall also consider well-behaved corrections, that is, variations of $\theta$ which do not vary itself appreciably, a condition whose mathematical implementation reads $\partial^2\theta\rightarrow 0$ in all the relevant regions of interest. Within this context, there is no curvature associated with $\tilde{\eta}_{\mu\nu}$ in a torsionless, metric-compatible setup. We are aware that this may sound like an oversimplification. However, we shall pursue in this paper flat spacetime consequences of the construction\footnote{The metric-compatible condition is indeed trivially satisfied within this set of assumed simplifications.}. 

In light of the previous discussion, we shall deal with 
\begin{equation}
\tilde{\eta}_{\alpha\beta}=\eta_{\alpha\beta}+x_{\alpha}\partial_{\beta}\theta+x_{\beta}\partial_{\alpha}\theta, \label{sim}
\end{equation} with inverse given, then, by $\tilde{\eta}^{\alpha\beta}=\eta^{\alpha\beta}-x^{\alpha}\partial^{\beta}\theta-x^{\beta}\partial^{\alpha}\theta$. It is clear that a given theory constructed upon $\widetilde{\mathbb{R}^{1+3}}$, given the explicit $x^\mu$ terms, violate Lorentz symmetries. It makes the construction here developed a natural scenario for Lorentz violating models \cite{VL}, and the vast literature pointing to upper limits to the Lorentz violation terms could be used here to constrain $\partial\theta$ terms in principle. Besides that, it is quite clear, from (\ref{sim}), that the invariance under spacetime translations is, at most, only approximate here. Actually, $\widetilde{\mathbb{R}^{1+3}}$ is not an affine space due to the localized nontrivial topology presence.    

\subsection{Dynamics in $\widetilde{\mathbb{R}^{1+3}}$: quasinormal-like modes}

The nontrivial spacetime topology has induced corrections on the metric; hence, the Clifford mapping shall also reflect this feature. To appreciate that, we may define tetrad-like objects $e^\mu_{\;\,\alpha}=\delta^\mu_\alpha-x^\mu\partial_\alpha\theta$ and $e_\mu^{\;\,\alpha}=\delta_\mu^\alpha+x^\alpha\partial_\mu\theta$ such that $e^\mu_{\;\,\alpha}e_\mu^{\;\,\beta}=\delta_\alpha^\beta$ and $e^\mu_{\;\,\alpha}e_\nu^{\;\,\alpha}=\delta^\mu_\nu$ up to first order derivatives in $\theta$. It can be readily verified that 
\begin{equation}
\tilde{\eta}^{\mu\nu}=e^\mu_{\;\,\alpha}e^\nu_{\;\,\beta}\;\eta^{\alpha\beta} \label{t3} 
\end{equation} and, of course, $\tilde{\eta}_{\mu\nu}=e_\mu^{\;\,\alpha}e_\nu^{\;\,\beta}\eta_{\alpha\beta}$. Hence, by defining $\tilde{\gamma}^\mu=e^\mu_{\;\,\alpha}\gamma^\alpha$, we have 
\begin{equation} 
\{\tilde{\gamma}^\mu,\tilde{\gamma}^\nu\}=\{\gamma^\mu,\gamma^\nu\}-x^\nu\partial_\beta\theta\{\gamma^\mu,\gamma^\beta\}-x^\mu\partial_\alpha\theta\{\gamma^\alpha,\gamma^\nu\}, \label{t4}
\end{equation} which, by using the standard Clifford algebra relation for $\gamma^\mu$, leads to its familiar form for $\tilde{\gamma}^\mu$, that is $\{\tilde{\gamma}^\mu,\tilde{\gamma}^\nu\}=2\tilde{\eta}^{\mu\nu}\mathbb{1}$, where $\mathbb{1}$ stands for the identity. Similarly, for $\tilde{\gamma}_\kappa=e_\kappa^{\;\,\alpha}\gamma_\alpha$, we have $\{\tilde{\gamma}_\mu,\tilde{\gamma}_\nu\}=2\tilde{\eta}_{\mu\nu}\mathbb{1}$. The recovery of the Clifford relation in terms of $\tilde{\gamma}^\mu$ gives a clue for the corrected Dirac operator in $\widetilde{\mathbb{R}^{1+3}}$. Thus, fermions in this spacetime shall be subject to dynamics given by $(i\tilde{\gamma}^\mu\partial_\mu-m)\Psi=0$. A remark must be made after such a dynamic equation has been written. As a result of the spacetime deformation to $\widetilde{\mathbb{R}^{1+3}}$, one should be careful about the type (so to speak) of the resulting topology. There are nontrivial examples of constructed base manifolds whose net result is the impossibility of the existence of a spinor structure. Here we shall assume a suitable nontrivial topological deformation so that non-unique spinor structures occur\footnote{In this regard, a classic theorem by Geroch (first paper of \cite{exot}) stays, roughly speaking, that a spinor structure is guaranteed in a non-compact manifold $\mathcal{M}$ as far as an Eq. like (\ref{t3}) is assured at every point of $\mathcal{M}$.}, but we call attention to the fact that, as studied in the Appendix, some standard concepts regarding exterior and differential forms are quite the same over $\widetilde{\mathbb{R}^{1+3}}$. In contrast, other concepts need additional care and further investigation. Even in the case in which standard theorems apply, however, as the symmetry group of $\widetilde{\mathbb{R}^{1+3}}$ is not the Lorentz group in general, a spinor here does not mean a mathematical object carrying a spin $1/2$ representation of the Lorentz group but a deformed object which recovers its usual concept of spinor as $\theta\rightarrow cte$, i. e. when the topology is trivial.

Analyzing the Klein-Gordon-like equation emerging from squaring the Dirac-like dynamical equations is informative. Squaring the exotic spinor equation in $\widetilde{\mathbb{R}^{1+3}}$ is work out the expression
\begin{equation}
(i\tilde{\gamma}^\nu\partial_\nu+m)(i\tilde{\gamma}^\mu\partial_\mu-m)\Psi=0. \label{q1}
\end{equation} Bearing in mind the discussion around (\ref{base}) we have $\tilde{\gamma}^\mu\partial_\mu\theta\simeq \gamma^\mu\partial_\mu\theta$. Moreover, it is convenient to make all the gamma tilde terms explicit since it has non-vanishing derivatives due to the presence of $x^\mu$. In the process of squaring, even within our approximations, terms as $\gamma^\nu\partial_\nu(\gamma^\alpha\partial_\alpha\theta\Psi)\simeq \gamma^\nu\gamma^\alpha\partial_\alpha\theta\partial_\nu\Psi$ shall appear. The most tricky term is given by $\gamma^\nu\partial_\nu(\gamma^\alpha\partial_\alpha\theta x^\mu\partial_\mu\Psi)$, which amounts to $\gamma^\nu\gamma^\alpha\partial_\alpha\theta(\partial_\nu\Psi+x^\mu\partial_\nu\partial_\mu\Psi)$. After these considerations, we have, as a partial result,
\begin{eqnarray}
-\gamma^\nu\gamma^\mu\partial_\nu\partial_\mu\Psi\!\!\!&+&\left.\!\!\!\gamma^\nu\gamma^\alpha\partial_\alpha\theta\partial_\nu\Psi+\gamma^\nu\gamma^\alpha\partial_\alpha\theta x^\mu\partial_\nu\partial_\mu\Psi\right.\nonumber\\&+&\left.\!\!\! \gamma^\alpha\gamma^\mu\partial_\alpha\theta x^\nu\partial_\nu\partial_\mu\Psi-m^2\Psi=0. \right.\label{f1}
\end{eqnarray} 

Now, by means of the standard relation $\gamma^\mu\gamma^\nu=\frac{1}{2}[\gamma^\mu,\gamma^\nu]+\eta^{\mu\nu}$ all the remain steps follow straightforwardly to 
\begin{equation}
(\Box+m^2)\Psi-\partial^\alpha\theta(1+2x^\mu\partial_\mu)\partial_\alpha\Psi+\frac{1}{2}
[\gamma^\alpha,\gamma^\beta]
\color{black}
\partial_\alpha\theta\partial_\beta\Psi=0\label{f2}
\end{equation} and in the case of trivial topology (or topological negligible net effect), the usual Klein-Gordon is recovered, as expected. Eq. (\ref{f2}) has significant consequences to be highlighted. Notice the existence of a `dissipative' first derivative term as a consequence of a nontrivial topology. This factor shall impact the dispersion relation resulting in a term leading to the quasinormal-like modes. Quasinormal modes naturally arise in linear black hole perturbation theory \cite{quasi, Berti:2022hwx}. It describes the so-called ``ringdown'' phase of the black hole coalescence, corresponding to the final part of the gravitational waves signal, as in recent LIGO/VIRGO detections \cite{LIGOScientific:2016aoc, LIGOScientific:2017ycc, LIGOScientific:2019fpa, LIGOScientific:2018dkp}. When perturbing a vibrating string, the boundary conditions select discrete values of the frequency $\omega$, as well as the normal modes of the system. Although their existence also depends on the boundary conditions, quasinormal frequencies are given by complex numbers. The real part is the proper frequency, and the imaginary part is associated with the decay timescale of the dumped oscillation. Quasinormal modes are indeed valuable for treating the dissipative feature of the system. In fact, beyond the usual simplifications, realistic physical systems are dissipative. Thus it would be reasonably expected that quasinormal modes appear in a broad class of problems in physics.

It is worth noticing that bosonic fields shall also respect Eq. (\ref{f2}). It is a strength of the formalism we investigate here: nontrivial topology impacts all physical fields. Therefore, as a program to extract physical information about Eq. (\ref{f2}), let $\phi(x)$ be a scalar field and denote by $v^\alpha\equiv\partial^\alpha\theta$ the constant (in this approximation) vector encompassing topological effects. We shall study the dispersion relation coming from the matrix equation
\begin{equation}
\big\{(\Box+m^2)\phi(x)-v^\alpha\partial_\alpha\phi(x)-2v^\alpha x^\mu\partial_\mu\partial_\alpha\phi(x)\big\}\mathbb{1}+\frac{1}{2}
[\gamma^\alpha,\gamma^\beta]
\color{black}
\partial_\alpha\theta\partial_\beta\phi=0.\label{p1}
\end{equation} It is necessary to take some caution in handling the Fourier transform of (\ref{p1}) due to the presence of $x^\mu$. In general, a Fourier transform is well defined if the function to be transformed is integrable in any finite region, which is the case for $x^\mu$. However, the function should also vanish at infinity, which is problematic. We note once again that for Eq. (\ref{p1}) being valid, we are necessarily in a region of spacetime where the influence of the nontrivial topology is small enough that $\partial\theta$ is considered a constant.
%
%
\color{black}
Far away from this region, even $\partial\theta$ shall vanish. Therefore, we shall argue about integrability conditions within the suitable region for which Eq. (\ref{p1}) makes sense. The following construction may systematize these considerations: let $\widetilde{\mathcal{U}}\subset \widetilde{\mathbb{R}^{1+3}}$ to contain a nontrivial topology delimited in a finite sub-region of $\widetilde{\mathcal{U}}$, so that there exists $\mathcal{U}\subset\widetilde{\mathcal{U}}$ locally isomorphic to $\mathbb{R}^{1+3}$; that is, $\mathcal{U}$ is endowed with trivial topology. We shall consider $\mathcal{U}$ distancing what is necessary from the localized nontrivial topology so that the applied approximations are still valid. Consider also a finite time interval, say $\mathcal{I}$, and denote by $\mathcal{E}\subset \mathbb{R}^{3}$ a finite space region. We shall perform the Fourier transform in $\Sigma\equiv (\mathcal{I}\times\mathcal{E})\subset\mathcal{U}$ assumed then compact and orientable. Besides, we use $\langle B \rangle|_{\partial\Sigma}$ as a notation for boundary terms. Hence, if  
\begin{equation}
\mathcal{F}[\phi(x)](p)=\int_{\Sigma}d^4x \,\phi(x)e^{ipx} \label{p2}
\end{equation} is the Fourier transform of $\phi(x)$ in $\Sigma$, then it readily follows  
\begin{equation}
\mathcal{F}[\partial_\alpha\phi(x)](p)=-i\int_{\Sigma}d^4x \; p_\alpha \phi(x)e^{ipx}+\langle \phi(x)e^{ipx}\rangle|_{\partial\Sigma_\alpha},\label{p3} 
\end{equation}
\begin{equation}
\mathcal{F}[\partial_\mu\partial_\alpha\phi(x)](p)=-\int_{\Sigma}d^4x \; p_\mu p_\alpha \phi(x)e^{ipx}+\langle \partial_\alpha\phi(x)e^{ipx}\rangle|_{\partial\Sigma_\mu}-i\langle p_\mu \phi(x)e^{ipx}\rangle|_{\partial\Sigma_\alpha},\label{p4} 
\end{equation} where $\partial\Sigma_\mu$ denotes the boundary of $\Sigma$ taken from integration over $\partial_\mu$. From (\ref{p4}) we get 
\begin{equation}
\mathcal{F}[\Box\phi(x)](p)=-\int_\Sigma d^4x \;p^2\phi(x)e^{ipx}+\langle \partial_\alpha\phi(x) e^{ipx} \rangle|_{\partial\Sigma^\alpha}-i \langle p^\alpha \phi(x) e^{ipx} \rangle|_{\partial\Sigma_\alpha}. \label{p5}
\end{equation} The Fourier transform of $x^\mu$ in $\Sigma$ is 
$\mathcal{F}[x^\mu](p)=\int_\Sigma d^4x \, x^\mu e^{ipx}=-i\partial_p^\mu \int_\Sigma d^4x \,e^{ipx}$, where $\partial_p^\mu$ stands for derivative with respect to the momentum. As $\Sigma$ is a finite subregion and the integral does not take in the entire space domain, we cannot associate an exact Dirac delta representation to the last integral. We shall write $\Delta(p)=\int_\Sigma d^4x \,e^{ipx}$, so that $\mathcal{F}[x^\mu](p)=-i\partial_p^\mu\Delta(p)$. The functional form of $\Delta(p)$ is relevant since it is related to the setting of $\Sigma$. 

The last point to be stressed is that the last term in Eq. (\ref{p1}) makes necessary the use of the (converse of) convolution theorem. For ready reference, the convolution of two functions is given as usual by $f*g=\int d^4k' f(k')g(k-k')$, so that the Fourier transform of $x^\mu\partial_\mu\partial_\alpha\phi(x)$, given by the convolution of $\mathcal{F}[x^\mu]$ and (\ref{p4}), reads\footnote{Assuming we can interchange integrals accordingly.}  
\begin{eqnarray}
\mathcal{F}[x^\mu\partial_\mu\partial_\alpha\phi(x)]&=&\left. i\int_\Sigma d^4x \int d^4p' \;\partial_{p'}^\mu\Delta(p') (p-p')_\mu (p-p')_\alpha \phi(x) e^{i(p-p')x}\right.\nonumber\\&-&\left. i\langle \partial_\alpha\phi(x)\partial_p^\mu\Delta*e^{ipx} \rangle|_{\partial\Sigma_\mu}-\langle\phi(x)\partial_p^\mu\Delta*p_\mu e^{ipx} \rangle|_{\partial\Sigma_\alpha}. \right. \label{p6}
\end{eqnarray} Taking all these considerations into account, the Fourier transform of Eq. (\ref{p1}) may be written as\footnote{Multiplication by $\mathbb{1}$ in the boundary terms, without any other matrix specification, is tacitly implied.} 
\begin{eqnarray}&&
\left. \int_\Sigma d^4x \,\phi(x) \bigg[(-p^2+m^2+iv^\alpha p_\alpha)\mathbb{1}+\frac{i}{2}[\gamma^\beta,\gamma^\alpha]v_\alpha p_\beta\bigg]e^{ipx}-i\int_\Sigma d^4x \,2v^\alpha \phi(x) \partial_p^\mu\Delta*p_\mu p_\alpha e^{ipx}\right.\nonumber\\&+&\left.\big\langle v^\alpha\phi(x)\partial_p^\mu\Delta*p_\mu e^{ipx} \big\rangle\big|_{\partial\Sigma_\alpha}+\Big\langle [\partial^\alpha\phi(x)-(v^\alpha+ip^\alpha)\phi(x)]e^{ipx}\Big\rangle\Big|_{\partial\Sigma_\alpha}+2i\big\langle v^\alpha\partial_\alpha\phi(x)\partial_p^\mu\Delta*e^{ipx}\big\rangle\big|_{\partial\Sigma_\mu}\right.\nonumber\\&+&\left.
\bigg\langle \frac{1}{2}
[\gamma^\alpha,\gamma^\beta]
\color{black}
v_\alpha \phi e^{ipx}\bigg\rangle\bigg|_{\partial\Sigma_\beta}=0.\right.\label{p7}
\end{eqnarray} The first point to emphasize about (\ref{p7}) is that for a trivial topology, i. e. $v^\alpha=0$, there is no restriction to the $\Sigma$ region, and we are left with the usual relativistic dispersion relation. In the case of nontrivial topology, in the approximation context already discussed, Eq. (\ref{p7}) may be useful in practical applications for which the $\Sigma$ region may be delimited, provided appropriate boundary conditions, and the $\Delta(p)$ function properly evaluated. Incidentally, depending on the case to be investigated, the Fourier transform might even be discrete. However, for a general analysis not specifying the $\Sigma$ region, one cannot go further without some assumption about $\Delta(p)$. It is a good point to emphasize that an inspection on Eq. (\ref{p7}) evinces the existence of quasinormal energy dissipative terms due to the $iv^\alpha p_\alpha$ term. That is to say, the nontrivial topology indeed perturbs the field modes, and damping (or even amplifying) energy terms are in order. This aspect shows the necessity of some care in handling $p_0$ variables, opening the possibility for integration in the complex plane when computing the convolutions. Here we shall adopt a simpler setup for $\Sigma$, eliminating most complications due to quasinormal modes and making its consequences explicit. In fact, $\Delta=\int_\Sigma d^4x e^{ipx}=\int_\mathcal{I} dt\, e^{ip_0t}\int_\mathcal{E}d^3x\, e^{-i{\bf p}\cdot{\bf x}}$. Opening the possibility for a complex $p_0$ integration in time leads to 
\begin{equation}
\Delta=-\frac{i}{p_0}\Big[e^{i\mathrm{Re}(p_0)t}e^{-\mathrm{Im}(p_0)t}\Big]\bigg|_{\inf{(\mathcal{I})}}^{\sup{(\mathcal{I})}}\int_\mathcal{E}d^3x\, e^{-i{\bf p}\cdot{\bf x}}.\label{p8}
\end{equation} For a suitable $\Sigma$ region, the above integration may be done very small. If it is regarded as of first $\theta$ derivatives order, then, since $\Delta$ terms always appear multiplied by $v^\alpha$ in Eq. (\ref{p7}), they may quite well be disregarded from the analysis. While this is not the most general case, there is always a $\Sigma$ region for which this is indeed the case\footnote{As we are going to see in a moment, due to an additional constraint, the condition to disregard $v^\alpha\Delta$ is not entirely input to the time integration. Moreover, we stress that if we do not consider the interval of integration (times $v^\alpha$) small, but the exponential itself when multiplied by $v^\alpha$, then, after imposing suitable boundary conditions in Eq. (\ref{p7}), we are left only with the usual dispersion relation.}, provided $\mathrm{Im}(p_0)>0$. Besides, it has the bonus of being the simplest case and suffices to illustrate the physical effects. 

We are now able to rewrite Eq. (\ref{p7}) in light of the last approximations. As the last step, the adoption of joint Neumann-Dirichlet boundary conditions eliminates all the boundary terms\footnote{This procedure is the simplest one, but it would be important to investigate possible boundary dynamics issues \cite{PP} in the formulation.}, so that we are left with a complex matrix dispersion relation given by
\begin{eqnarray}\label{Rel_Dispr}
(-p^2+m^2+iv^\alpha p_\alpha)\mathbb{1}+\frac{i}{2}
[\gamma^\alpha,\gamma^\beta]
\color{black}
v_\alpha p_\beta=0. \label{ccr}
\end{eqnarray} Certainly, these boundary conditions are not consistent with every $\Sigma$ region. Once again, we are taking for granted, perhaps too naively, its validity for some $\Sigma$. As far as we can see, all the requirements imposed over $\Sigma$ are consistent, but of course, the generality of the formulation cannot be claimed anymore. The influence of topology can be made explicit by the following reasoning: all the off-diagonal elements of (\ref{Rel_Dispr}) vanish under the additional constraints  $\frac{p_0}{v_0}=\frac{p_1}{v_1}=\frac{p_2}{v_2}=\frac{p_3}{v_3}$. These very same conditions eliminate eventual diagonal terms other than $(-p^2+m^2+iv^\alpha p_\alpha)$. Taking into account the constraints, $p_j=\frac{v_j}{v_0}E$, it is straightforward to find the complex spectrum given by 
\begin{eqnarray}
 E\approx& i\frac{\dot{\theta}}{2}\pm m\big(1-\frac{1}{2}(\nabla\theta/\dot{\theta})^2\big).\label{sp1}
\end{eqnarray} Of course, the limit for trivial topology shall be taken already in (\ref{Rel_Dispr}), leading to the standard case. Besides, the spacial momentum contributes to the quasinormal-like behavior through the additional constraints. Let us resort once again to the two-dimensional case to gain insight into the physical results. The field modes are given by $e^{ipx}=\exp{[i(\mathrm{Re}(p_0)t-\mathrm{Re}(p_x)x)]}\exp{[-\mathrm{Im}(p_0)t+\mathrm{Im}(p_x)x]}$. By inspecting once again the $\Delta$ integral, also taking into account the possibility for $p_x$ be a complex number as the additional constraints suggest, we can see what kind of quasinormal-like behavior is necessary to disregard $\Delta v^\alpha$ terms. In the extreme and simplest case $\mathrm{Im}(p_0)=\dot{\theta}/2>0$ and $\mathrm{Im}(p_x)=\theta'/2<0$ and therefore $e^{ipx}$ is given by an oscillating term multiplied by $\exp{[-(\dot{\theta}t+|\theta'|x)/2]}$. The discussion around (\ref{p8}) says that this is a minimal damping effect, and even in the case $\dot{\theta}$ or $\theta'$ have a different sign, the net effect is always of damping. The nontrivial topology affects the $e^{ipx}$ outgoing (with respect to $\Sigma$) field modes as a friction term, and the opposite interpretation applies to ongoing modes. The whole situation may be summarized as follows: when reaching a spacetime region in which nontrivial topology effects are to be felt, field modes are split into ongoing and outgoing modes. The outgoing modes suffer a friction-like behavior, while the former has a reinforcement. In this last case, the modes shall rapidly lose the approximation conditions necessary for our approach.

Bearing in mind black hole quasinormal typical analysis, one could start with $\phi(t,x) = e^{i\omega t}\varphi(x)$ and study the behavior of the field modes from the equation of motion. However, in practice, the constraint eliminating off-diagonal elements in Eq. (\ref{p1}) trivializes the problem by allowing a first derivative equation for $\varphi(x)$. In this context, Eq. (\ref{p1}) would serve as the one to be solved for $\omega$. To make contact with the usual approach, let us work out Eq. (\ref{p1}) without implementing the topological constraint for a while. Remember that $\theta$ is linear in both variables, so that $\theta(t,x)\approx \alpha t+\beta x$, where $\alpha$ and $\beta$ are real constants. Taking all that into account, Eq. (\ref{p1}) reads  
\begin{align}
-\dfrac{\partial^2\varphi}{\partial x^2} + \dfrac{\partial \varphi}{\partial x} \beta + (m^2-\omega^2+\omega\alpha)\varphi(x)=0.
\end{align}
The first derivative term may be eliminated by the change variable $\dfrac{dy}{dx} = e^{\beta x}$, from which we have
\begin{align}
\dfrac{\partial^2\varphi}{\partial y^2} + (\omega^2-\omega\alpha - m^2)e^{-2\beta x}\varphi(y)=0.
\end{align}
Expanding up the exponential, we finally get
\begin{align}
\dfrac{\partial^2\varphi}{\partial y^2} + (2\beta x(y) - 1)(\omega^2 -\omega\alpha - m^2)\varphi(y)=0.
\end{align} Similar to what happens in black hole perturbation theory, we notice that it results in a more straightforward time-independent Schr\"odinger-like equation.

The friction-like interpretation of the scalar field dynamics in a scenario of nontrivial topology is endorsed by bringing back to the configuration space the simplifications derived in the momentum space. Naturally, further analysis should be applied to have a complete picture of the fundamental spacetime effects of the nontrivial topology. 

\section{Final Remarks}

In the spirit of Cartan's spinor framework, we have performed a geometrization of topology by investigating which effects would result from relating spacetime points to exotic spinor entries. We found a perturbed bilinear form encompassing nontrivial topology effects and studied the underlying mathematics it leads to. One of the strengths of this formalism is that nontrivial topology effects are likely to be felt by any field. Moreover, as the correction also involves the appearance of explicit spacetime coordinates, Lorentz violation models could benefit from this formulation because nontrivial topology geometrization could enlighten some approaches. 

After computing the Klein-Gordon-like equation in such a spacetime, we could investigate the influence of topology in the (quasinormal) modes of a scalar field. As a result, partially evanescent (or amplified) modes are expected, depending on the type of interaction with the $\theta(x)$ function reflecting the nontrivial topology. 

As remarked in the main text, relaxing a few approximations would naturally lead to a curved spacetime where nontrivial topology would serve as a source even in the absence of matter fields. It is worthwhile to pursue this branch of research. Bearing in mind the discussion around Eq. (\ref{aaa}), it would not be a surprise if some additional impositions on $\theta$ appear to avoid some variation of closed time-like curves \cite{goed}.

We conclude these remarks by pointing out an attempt to interpret our formulation in a different, less literal, form. Instead of thinking about a localized nontrivial topology, notice that the $\tilde{\eta}_{\mu\nu}$ corrections are encoded in $\partial\theta$ terms and, as $\theta$ is an adimensional function, the corrections scales with energy in natural units. The greater the energy, the greater the effect of nontrivial topology in the physical system. On the other hand, the higher the energy, the more short distances are scrutinized. Hence, it is possible to reinterpret our approach as a first attempt to investigate the physical consequences of a spacetime whose topology, in down deep scales, is nontrivial. In this context, the performed approximations find support in the energy level probed by the physical system. Besides, the finite $\Sigma$ region in which the Fourier transform takes place is still necessary since it is related to the integration in the momenta by the convolution presented in Eq. (\ref{p7}) and it cannot run over all momenta keeping valid the approximations used.

\section*{Acknowledgments}

JMHS thanks to CNPq (grant No. 303561/2018-1) for financial support. GMCR thanks to CAPES for financial support. The authors would like to express they gratitude to Prof. Elias L. Mendon\c ca for useful conversation.  

\section*{Appendix: Differential forms upon $\widetilde{\mathbb{R}^n}$}
%
%
This appendix is devoted to the study of differential forms in $\widetilde{\mathbb{R}^n}$, along with the investigation of exterior derivative within this space, for differential operators changed due to the nontrivial topology, in accordance to what was shown in the main text.

\begin{defi}
    Let $\Lambda_{1}(\widetilde{\mathbb{R}^n})$ be the space $\widetilde{\mathbb{R}^n}$ and $\Lambda^{1}(\widetilde{\mathbb{R}^n})$ its dual, $(\widetilde{\mathbb{R}^n})^*$. Similarly, the alternating $k$-vectors, and the differential $k$-forms will be denoted, respectively, by $\Lambda_{k}(\widetilde{\mathbb{R}^n})$ and $\Lambda ^{k}(\widetilde{\mathbb{R}^n})$.
\end{defi}
The $1-$form is just given by (\ref{dep}), and to find out a general expression of a $k-$form, we start investigation in some detail $2-$forms belonging to $\Lambda^{2}(\widetilde{\mathbb{R}^n})$, after what a generalization comes straightforwardly.  

A $2-$form $\omega$ $\in \Lambda^2(\widetilde{\mathbb{R}^n})$ is given by 
\begin{equation}
    \omega=\omega_{ij}\dd x^i\wedge\dd x^j,
\end{equation}
which amount to be 
\begin{equation}
    \omega=\omega_{ij}dx^i\wedge dx^j + \omega_{ij}x^j \partial_c \theta  dx^i\wedge dx^c + \omega_{ij}x^i\partial_k\theta dx^k\wedge dx^j,
\end{equation} or simply 
\begin{equation}
    \omega=(\omega_{ij} + 2\omega_{ik}x^k \partial_j \theta)dx^i\wedge dx^j.
\end{equation}
Similarly, a given $k-$form is given by 
\begin{equation}
    \omega=(\omega_{i_1 ... i_k} + k\omega_{a i_2 ... i_k}x^a \partial_{i_1} \theta)dx^{i_1}\wedge...\wedge dx^{i_k},\label{dep2}
\end{equation}
as can be seen by induction. In fact, assuming the validity of (\ref{dep2}), a $(k+1)-$form $\omega'\in\Lambda^{k+1}(\widetilde{\mathbb{R}^n})$ will be given by  
\begin{equation}
    \omega'
    =
    (
    \omega'_{i_1...i_{k+1}}
    +
    k\omega'_{a i_2...i_{k+1}} x^a \partial_{i_1}\theta
    )
    dx^{i_1}
    \wedge...\wedge
    dx^{i_{k}}
    \wedge
    \dd x^{i_{k+1}}
\end{equation} which can be recast in 
\begin{multline}
    \omega'
    =
    \omega'_{i_1...i_{k+1}}dx^{i_1}\wedge...\wedge dx^{i_{k+1}}
    +
    kx^a\partial_{i_1}\theta\omega'_{a i_2...i_{k+1}}dx^{i_1}\wedge...\wedge dx^{i_{k+1}}
    +\\+
    \omega'_{b i_2...i_{k}a}x^{a}\partial_{i_1}\theta dx^{b}\wedge dx^{i_2}\wedge...\wedge dx^{i_{k}} \wedge dx^{i_1}.    
\end{multline}
Now, permuting the order of the indices $b$ and $a$ in $\omega'_{b i_2...i_{k}a}$, we get $(-1)^{k(k-1)}\omega'_{a i_2...i_{k}b}$ and permuting the indices $i_1$ and $b$ in $dx^{b}\wedge dx^{i_2}\wedge...\wedge dx^{i_{k}} \wedge dx^{i_1}$ we get $ (-1)^{k(k-1)}dx^{i_1}\wedge dx^{i_2}\wedge...\wedge dx^{i_{k}} \wedge dx^{b}$. Thus, when making these modifications, along with the addition of taking $b\to i_{k+1}$, we arrive at 
\begin{equation}
    \omega'
    =
    \big(
    \omega'_{i_1...i_{k+1}}
    +
    (k+1)x^a\partial_{i_1}\theta\omega'_{a i_2...i_{k+1}}
    \big)
    dx^{i_1}\!\wedge...\wedge dx^{i_{k+1}}.
\end{equation}
Taking the definition 
    $\bar{\omega}_{i_1...i_k} = \big(
    \omega_{i_1...i_{k}}
    +
    kx^a\partial_{i_1}\theta\omega_{a i_2...i_{k}}
    \big)$, the above equation for the $k-$form can be rewritten in the compact form 
$\omega
    =
    \bar{\omega}_{i_1...i_k}
    dx^{i_1}\!\wedge...\wedge dx^{i_{k}}
$.

The exterior derivative in $\widetilde{\mathbb{R}^n}$ of one $k-$form, $\omega$, is given by
\begin{equation}
    \dd\omega
    =
    \partial_n
    \bar{\omega}_{i_1...i_k}
    \dd x^{n}\wedge dx^{i_1}\wedge...\wedge dx^{i_k}.
\end{equation}
which reads 
\begin{equation}
    \dd\omega
    =
    (
    \partial_n\bar{\omega}_{i_1...i_k}
    +
    x^{m}\partial_{n}\theta\partial_m\bar{\omega}_{i_1...i_k}
    )
    dx^{n}\wedge dx^{i_1}\wedge...\wedge dx^{i_k}.
\end{equation}
Since 
$x^{m}\partial_{n}\theta\partial_m\bar{\omega}_{i_1...i_k}
    =
    x^{m}\partial_{n}\theta\partial_m\omega_{i_1...i_k}
$ and 
$    \partial_{n}\bar{\omega}_{i_1...i_k}
    =
    \partial_{n}\omega_{i_1...i_k}
    +
    k\partial_{i_1}\theta\omega_{n i_2...i_k}
    +
    kx^{a}\partial_{i_1}\theta\partial_{n}\omega_{a i_2...i_k}
$ we are left with 
\begin{equation}
    \dd\omega
    =
    (
    \partial_{n}\omega_{i_1...i_k}
    +
    k\partial_{i_1}\theta\omega_{n i_2...i_k}
    +
    kx^{a}\partial_{i_1}\theta\partial_{n}\omega_{a i_2...i_k}
    +
    x^{a}\partial_{n}\theta\partial_a\omega_{i_1...i_k}
    )
    dx^{n}\wedge dx^{i_1}\wedge...\wedge dx^{i_k}.
\end{equation}

The reader certainly sees that all the usual expressions for differential forms and exterior derivatives are trivially recovered in the trivial topology limit. The differences evinced for nontrivial topology, however, bring a significant consequence. Let us investigate the second exterior derivative, which is always null in the usual case. To do so, from simplicity, take $(
    \partial_{n}\omega_{i_1...i_k}
    +
    k\partial_{i_1}\theta\omega_{n i_2...i_k}
    +
    kx^{a}\partial_{i_1}\theta\partial_{n}\omega_{a i_2...i_k}
    +
    x^{a}\partial_{n}\theta\partial_a\omega_{i_1...i_k}
    )=\{...\}_{ni_1...i_k}$. With the aid of this short notation, we have 
\begin{equation}
    \dd^{2}\omega
    =
    \partial_{m}
    \{...\}_{ni_1...i_k}
    \dd x^{m}\wedge dx^{n}\wedge dx^{i_1}\wedge...\wedge dx^{i_k},
\end{equation} resulting in 
\begin{equation}
    \dd^{2}\omega
    =
    (
    \partial_{m}
    \{...\}_{ni_1...i_k}
    +
    x^{b}\partial_{m}\theta\partial_{b}\{...\}_{ni_1...i_k}
    )
    dx^{m}\wedge dx^{n}\wedge dx^{i_1}\wedge...\wedge dx^{i_k}.
\end{equation}
Opening the term $\partial_{m}\{...\}_{ni_1...i_k}$  we are still left with the following non-vanishing result
\begin{equation}
    \dd^{2}\omega
    =
    \partial_{n}\theta\partial_{m}\omega_{i_1...i_k}
    dx^{m}\wedge dx^{n}\wedge dx^{i_1}\wedge...\wedge dx^{i_k}.
\end{equation} This expression suggests that the very definition of exotic and closed forms shall be reformulated, with possible interesting consequences in algebraic topology results. We notice, by passing, that the third exterior derivative reads
$\dd^{3}\omega
    =
    x^{a}\partial_{b}\theta\partial_{a}\partial_{n}\theta\partial_{m}\omega_{i_1...i_k}
    dx^{b}\wedge dx^{m}\wedge dx^{n}\wedge dx^{i_1}\wedge...\wedge dx^{i_k},
$
which vanishes if one is willing to accept, in this more formal context, the physical approximation we have used along the main text. 

Apart from the aforementioned sharp difference between the usual case and the case at hand, there are also similarities worth analyzing for bookkeeping purposes. We shall investigate two relevant results which are the same in both cases. Firstly, let $\omega$ and $\eta$ be a $k$ and $l-$form, respectively. Hence 
\begin{align}
    \omega&
    =
    (
    \omega_{i_1...i_k}
    +
    kx^{a}\partial_{i_1}\theta\omega_{ai_2...i_k}
    )dx^{i_1}\wedge...\wedge dx^{i_k},
    \\
    \eta&
    =
    (
    \eta_{j_1...j_l}
    +
    lx^{b}\partial_{j_1}\theta\omega_{bj_2...j_l}
    )dx^{j_1}\wedge...\wedge dx^{j_l}.
\end{align}
The exterior derivative of the product $\omega\wedge\eta$ is given by 
\begin{multline}
    \dd(\omega\wedge\eta)
    =
    \partial_{m}(
    \omega_{i_1...i_k}
    +
    kx^{a}\partial_{i_1}\theta\omega_{ai_2...i_k}
    )
    (
    \eta_{j_1...j_l}
    +
    lx^{b}\partial_{j_1}\theta\omega_{bj_2...j_l}
    )
    \dd x^{m}\wedge dx^{i_1}\wedge...\wedge dx^{i_k}\wedge dx^{j_1}\wedge...\wedge dx^{j_l}
    +\\+(-1)^{k}
    (
    \omega_{i_1...i_k}
    +
    kx^{a}\partial_{i_1}\theta\omega_{ai_2...i_k}
    )
    \partial_{m}
    (
    \eta_{j_1...j_l}
    +
    lx^{b}\partial_{j_1}\theta\omega_{bj_2...j_l}
    )
    dx^{i_1}\wedge...\wedge dx^{i_k}\wedge \dd x^{m}\wedge dx^{j_1}\wedge...\wedge dx^{j_l}
\end{multline}
and, thus, one sees that the standard relation for Leibniz's exterior derivative \cite{che} rule is recovered, that is
\begin{equation}
    \dd(\omega\wedge\eta)
    =
    \dd\omega\wedge\eta
    +
    (-1)^{\deg(\omega)}
    \omega\wedge\dd\eta.
\end{equation}

To address the second relevant similar result, let us recall some definitions designed, so to speak, for our case.  
\begin{defi}
    Let 
    $i_\lambda:\widetilde{\mathbb{R}^n}\hookrightarrow\mathbb{R}\times \widetilde{\mathbb{R}^n}$ such that $i_\lambda(x^{i_1},...,x^{i_k})\mapsto(\lambda,x^{i_1},...,x^{i_k})$ be the \textit{natural injection}.
\end{defi}
The natural injection may be, of course, defined for an open set $\widetilde{A}\subset\widetilde{\mathbb{R}^{n}}$. Besides, although $i_\lambda$ is defined over $\widetilde{\mathbb{R}^n}$, the same idea of inclusion can be extended for $\Lambda_k(\widetilde{\mathbb{R}^{n}})$ and $\Lambda^k(\widetilde{\mathbb{R}^{n}})$, so that it is also possible to engender the pullback of $\omega$ by $i_\lambda^{*}$, denoted hereon by $\omega_\lambda:=i_\lambda^{*}\omega=\omega\circ i_\lambda$.
\begin{equation}
\xymatrixcolsep{4pc}\xymatrix{
    \Lambda_k(\widetilde{\mathbb{R}^{n}}) \ar[d]_-{\widetilde{\eta}} \ar@{^{(}->}[r]^-{i_\lambda} \ar@/^2pc/[rr]^-{\omega\circ i_\lambda = \omega_\lambda} &\Lambda_{k}(\mathbb{R}\times\widetilde{\mathbb{R}^{n}}) 
    \ar[r]^-{\omega} &\mathbb{R}\\
    \Lambda^{k}(\widetilde{\mathbb{R}^{n}})\ar@{^{(}->}[r]^-{i_\lambda^{*}}&\Lambda^{k}(\mathbb{R}\times\widetilde{\mathbb{R}^{n}})
}
\end{equation} The most important point to be stressed about the pull-back induction (see the schematic diagram) is $\tilde{\eta}$ ensuring the isomorphism between $\Lambda_k(\widetilde{\mathbb{R}^{n}})$ and $\Lambda^{k}(\widetilde{\mathbb{R}^{n}})$. There exists a canonical isomorphism if and only if $\tilde{\eta}$ is non-degenerate (see discussion around Eq. (\ref{o2})). 

\begin{defi}
    Let $\widetilde{A}\subset\widetilde{\mathbb{R}^{n}}$. The so-called Homotopy operator $H$ is a linear application $\Lambda^{k+1}(\mathbb{R}\times \widetilde{{A}^{n}})\to\Lambda^{k}(\widetilde{{A}^{n}})$ such that:
    \begin{enumerate}
        \item If $\omega=\bar{\omega}(\lambda,x^{i_1},...,x^{i_k})_{i_1...i_{k+1}}dx^{i_1}\wedge...\wedge dx^{k+1}$, then
        \begin{equation}H\omega=0;\end{equation}
        \item If $\omega=\bar{\omega}(\lambda,x^{i_1},...,x^{i_k})_{\lambda i_1...i_{k}}d\lambda\wedge dx^{i_1}\wedge...\wedge dx^{k}$, then
        \begin{equation}H\omega=\bigg(\int\limits^{1}_{0}\bar{\omega}(\lambda,x^{i_1},...,x^{i_k})_{\lambda i_1...i_{k}}d\lambda\bigg)dx^{i_1}\wedge...\wedge dx^{k}.
        \end{equation}
    \end{enumerate}
\end{defi} 
Usually, the definition of the domain of $H$ takes it as a star-shaped set. This is unnecessary here, but we still need $\lambda \in \mathbb{R}$; that is, the injection inserts a real parameter into the form coefficient argument. With these definitions and considerations, we can prove the validity of a well-known standard result about the Homotopy operator in our context.   
\begin{lemma}\label{homotopia}
    The Homotopy operator $H$ $(i)$ commutes with $\lambda$ independent $C^\infty$ functions and $(ii)$ $\forall \omega\in\Lambda^{k+1}(\mathbb{R}\times \widetilde{A})$ it satisfies   
           \begin{equation}
            H\dd\omega + \dd H\omega
            =
            \omega_1 - \omega_0,
        \end{equation}
       where $\omega_1=i_1^{*}\omega$ e $\omega_0=i_0^{*}\omega$.
\end{lemma}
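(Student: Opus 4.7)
The plan is to exploit the bilinearity of both $H$ and $\dd$ to reduce the claim to two basic cases. Any $\omega\in\Lambda^{k+1}(\mathbb{R}\times\widetilde{A})$ decomposes uniquely as $\omega=\omega^{\perp}+\omega^{\parallel}$, where $\omega^{\perp}=\bar{\omega}_{i_1\ldots i_{k+1}}(\lambda,x)\,dx^{i_1}\wedge\cdots\wedge dx^{i_{k+1}}$ contains no $d\lambda$ factor and $\omega^{\parallel}=\bar{\omega}_{\lambda j_1\ldots j_k}(\lambda,x)\,d\lambda\wedge dx^{j_1}\wedge\cdots\wedge dx^{j_k}$ gathers the $d\lambda$-terms. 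Part $(i)$ is then immediate: $H$ vanishes on $\omega^{\perp}$-pieces, and for $\omega^{\parallel}$-pieces a $\lambda$-independent function $f(x)$ passes through $\int_0^{1}d\lambda$ as a constant. The effort therefore concentrates on establishing part $(ii)$.

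For the $\omega^{\perp}$ piece, $H\omega^{\perp}=0$ forces $\dd H\omega^{\perp}=0$, so only $H\dd\omega^{\perp}$ contributes. The exterior derivative $\dd\omega^{\perp}$ splits into a $d\lambda$-bearing part, $\partial_\lambda\bar{\omega}_{i_1\ldots i_{k+1}}\,d\lambda\wedge dx^{i_1}\wedge\cdots\wedge dx^{i_{k+1}}$, plus a purely spatial part that $H$ annihilates by definition. Applying $H$ to the surviving piece yields, via the fundamental theorem of calculus, $\bigl[\bar{\omega}_{i_1\ldots i_{k+1}}(1,x)-\bar{\omega}_{i_1\ldots i_{k+1}}(0,x)\bigr]\,dx^{i_1}\wedge\cdots\wedge dx^{i_{k+1}}$, which matches $\omega_1^{\perp}-\omega_0^{\perp}$ exactly, since $i_\lambda^{*}$ merely evaluates the coefficient at the fixed value $\lambda$ while acting trivially on the $dx^{i_j}$.

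For the $\omega^{\parallel}$ piece, the pullback annihilates $d\lambda$, so $\omega_1^{\parallel}-\omega_0^{\parallel}=0$, and it remains to prove $\dd H\omega^{\parallel}+H\dd\omega^{\parallel}=0$. The $\partial_\lambda$-part of $\dd\omega^{\parallel}$ dies by $d\lambda\wedge d\lambda=0$, leaving $-d\lambda\wedge[\partial_n+x^m\partial_n\theta\,\partial_m]\bar{\omega}_{\lambda j_1\ldots j_k}\,dx^{n}\wedge dx^{j_1}\wedge\cdots\wedge dx^{j_k}$, whose $H$-image integrates the bracketed coefficient over $\lambda\in[0,1]$. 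Independently, $H\omega^{\parallel}=\bigl(\int_0^{1}\bar{\omega}_{\lambda j_1\ldots j_k}\,d\lambda\bigr)\,dx^{j_1}\wedge\cdots\wedge dx^{j_k}$, and the deformed $\dd$ applied to it yields the same bracketed operator acting on that integral, with the opposite overall sign; the two contributions cancel.

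The main obstacle, and the only genuinely nontrivial step, is this last commutation: the $\theta$-corrected spatial derivative $\partial_n+x^m\partial_n\theta\,\partial_m$ must exchange with $\int_0^{1}d\lambda$. Because the coefficient $x^m\partial_n\theta$ depends only on $x$ and the integrand $\bar{\omega}_{\lambda j_1\ldots j_k}(\lambda,x)$ is smooth in $(\lambda,x)$, standard differentiation under the integral sign legitimizes the exchange. A minor secondary point is the bookkeeping between bare coefficients $\omega_{\ldots}$ and their barred counterparts $\bar{\omega}_{\ldots}$, but the $\perp/\parallel$ split respects the barring operation because the topological correction only mixes spatial indices among themselves, so the entire calculation can be run in the barred notation without loss.
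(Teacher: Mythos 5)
Your proof is correct and follows essentially the same route as the paper: the same split of $\omega$ into its $d\lambda$-free and $d\lambda$-bearing parts, the fundamental theorem of calculus yielding $\omega_1-\omega_0$ for the former, and the cancellation $H\dd\omega^{\parallel}+\dd H\omega^{\parallel}=0$ together with the vanishing pullback of $d\lambda$ for the latter. Your explicit justification of interchanging the $\theta$-corrected spatial derivative with $\int_0^1 d\lambda$ is a small point the paper leaves implicit, but it does not change the argument.
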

\begin{proof} The proof of $(i)$ is trivial. For the second part, the proof will be divided into two parts, where $\omega$ has no dependency on $d\lambda$, $(ii_1)$, and the case $(ii_2)$ where it has a dependency on $d\lambda$.
      
      $(ii_{1})$ For this case $\omega=\bar{\omega}(\lambda,x^{i_1},...,x^{i_k})_{i_1...i_{k+1} }dx^{i_1}\wedge...\wedge dx^{k+1}$, and therefore $H\omega=0$ $\Rightarrow$ $\dd H\omega=0$.
    On the other hand, $H\dd\omega$ will be given by:
    \begin{multline}
        H\dd\omega
        =
        H
        \bigg\{
        \partial_\lambda \bar{\omega}(t,x^{i_1},...,x^{i_k})_{i_1...i_{k+1}}d\lambda\wedge dx^{i_1}\wedge...\wedge dx^{k+1}
        +\\+
        \partial_j \bar{\omega}(\lambda,x^{i_1},...,x^{i_k})_{i_1...i_{k+1}}\dd x^{j}\wedge dx^{i_1}\wedge...\wedge dx^{k+1}
        \bigg\}, 
    \end{multline} so that 
    \begin{equation}
        H\dd \omega
        =
        H
        \bigg(\int\limits^{1}_{0}\partial_\lambda \bar{\omega}(\lambda,x^{i_1},...,x^{i_k})_{i_1...i_{k+1}}dt\bigg)dx^{i_1}\wedge...\wedge dx^{k+1}=\omega_1 - \omega_0
    \end{equation} and therefore $H\dd \omega + \dd H\omega = \omega_{1}-\omega_0$. 
    
    $(ii_{2})$ For this case the acting if $H$ upon $\dd \omega$ is given by 
    \begin{align}
        H\dd \omega 
                &=
        -\sum\limits_{j}
        \bigg(\int\limits^{1}_{0}
        \partial_j
        \bar{\omega}(\lambda,x^{i_1},...,x^{i_k})_{\lambda i_1...i_{k}}d\lambda\bigg)\dd x^{j}\wedge dx^{i_1}\wedge...\wedge dx^{k}.
    \end{align}
    On the other hand
    \begin{align}
    \dd H\omega
        &=
    \bigg(
    \int\limits_{0}^{1}\partial_{j}\bar{\omega}(\lambda,x^{i_1},...,x^{i_k})_{\lambda i_1...i_{k}}d\lambda
    \bigg)\dd x^{j}\wedge dx^{i_1}\wedge...\wedge dx^{i_k},
    \end{align} leading directly to $H\dd\omega+\dd  H\omega=0$. Nevertheless, for a fixed $\lambda$ the pull-back is always null and $\omega_1=\omega_0=0$.
\end{proof}
The presented results pave the way for the study of gauge forms. They may be a starting point for investigating gauge theories (electrodynamics in particular) in nontrivial topologies. While its systematic study shall be addressed someplace else, we intend to finalize this discussion by pointing out that if the gauge connection is denoted by $\tilde{A}=A_\mu dx^\mu+A_\alpha x^\alpha \partial_\mu\theta dx^\mu$, the field strength counterpart, $\tilde{F}$, reads 
\begin{eqnarray}     
\tilde{F}=\frac{1}{2}F_{\mu\nu}dx^\mu \wedge dx^\nu+\big[(A_\mu+x^\alpha\partial_\mu A_\alpha)\partial_\nu\theta+\partial_\alpha A_\nu x^\alpha\partial_\mu\theta\big]dx^\mu \wedge dx^\nu. 
\end{eqnarray} where $F_{\mu\nu}=\partial_\mu A_\nu-\partial_\nu A_\mu$. Consequently, gauge invariance is also lost due to nontrivial topological effects.


\begin{thebibliography}{10}	
	
	\bibitem{cart} E. Cartan, {\it The Theory of Spinors}, 1st ed., Dover Publications: New York, NY, USA (1966).
		
	\bibitem{penro} R. Penrose and M. A. H. MacCallum, {\it Twistor theory: An approach to the quantisation of fields and space-time}, Phys. Rep. {\bf 6}, 241 (1972).
	
	\bibitem{exot} R. P. Geroch, {\it Spinor Structure of Space-Times in General Relativity I}, J. Math. Phys. {\bf 9}, 1739 (1968); R. P. Geroch, {\it Spinor Structure of Space-Times in General Relativity II} J. Math. Phys. {\bf 11}, 343 (1970).
	
	\bibitem{petry} H. R. Petry, {\it Exotic spinors in superconductivity}, J. Math. Phys. {\bf 20}, 231 (1979).
	
	\bibitem{field} S. J. Avis and C. J. Isham, {\it Lorentz gauge invariant vacuum functionals for quantized spinor fields in non-simply connected space-times}, Nucl. Phys. B {\bf 156} (1979) 441.
	
	\bibitem{f2} C. J. Isham, { \it Twisted quantum fields in a curved space-time}, Proc. R. Soc. London, Ser. A {\bf 362}, 383 (1978).
	
	\bibitem{nos} R. da Rocha, A. E. Bernardini, and J. M. Hoff da Silva, {\it Exotic Dark Spinor Fields}, JHEP {\bf 04}, 110 (2011). 
	
	\bibitem{so} J. M. Hoff da Silva, D. Beghetto, R. T. Cavalcanti, and  R. Da Rocha, {\it Exotic fermionic fields and minimal length}, Eur. Phys. J. C {\bf 80}, 8 (2020).
	
	\bibitem{ozer} R. da Rocha and A. A. Tomaz, {\it Hearing the shape of inequivalent spin structures and exotic Dirac operators}, J. Phys. A: Math. Theor. {\bf 53}, 465201 (2020).
	
	\bibitem{MC} R. Aurich, T. Buchert, M. J. France, and F. Steiner, { \it The variance of the CMB temperature gradient: a new signature of a multiply connected Universe}, Class. Quantum Grav. {\bf 38}, 225005 (2021).
				
	\bibitem{VL} V. A. Kosteleck\'y, {\it Gravity, Lorentz violation, and the standard model}, Phys. Rev. D {\bf 69}, 105009 (2004); D. Colladay and V. A. Kosteleck\'y, {\it Lorentz-violating extension of the standard model}, Phys. Rev. D {\bf 58}, 116002 (1998).
	
	\bibitem{quasi} E. Berti, V. Cardoso, and A. O. Starinets, {\it Quasinormal modes of black holes and black branes} Class. and Quantum Gravity {\bf 26}, 163001 (2009).
		
	\bibitem{pm} R. Penrose and W. Rindler, {\it Spinors and space-time}. Vol. I, Cambridge Monographs on Mathematical Physics, Cambridge University Press (1984).	
		
	\bibitem{mil} J. W. Milnor, {\it Spin structures on manifolds}, L’ Enseignement Math. {\bf 9}, 198 (1963).
	
	\bibitem{hatcher} A. Hatcher, {\it Vector bundles and K-theory}, Ithaca: Cornell University (2017).
	
	\bibitem{3} T. Asselmeyer-Maluga and C. H. Brans, {\it Exotic Smoothness and Physics: Differential Topology and Spacetime Models}, World Scientific Pub Co Inc, Singapure (2007).
	
	\bibitem{SC}  C. J. Isham, {\it Spinor fields in four-dimensional space-time}, Proc. R. Soc. London, Ser. A {\bf 364} 591 (1978).
		
		\bibitem{bhs} E. Gourgoulhon and J. Jaramillo, {\it A $3+1$ perspective on null hypersurfaces and isolated
			horizons}, Phys. Rept. {\bf 423},159 (2006). 
		
	\bibitem{aaca2018} D. Beghetto, R.T. Cavalcanti and J. M. Hoff da Silva, {\it Exotic spinorial structure and black holes in General Relativity}, Advances in Applied Clifford Algebras, {\bf 28}, 96 (2018).
	
	\bibitem{PP} J. M. P\'erez-Pardo, M. Barbero-Li\~n\'an, and A. Ibort, {\it Boundary dynamics and topology change in quantum mechanics}, Int. J. Geom. Meth. Mod. Phys., {\bf 12}, 1560011 (2015).
	
\bibitem{Berti:2022hwx}
E.~Berti, V.~Cardoso, M.~H.~Y.~Cheung, F.~Di Filippo, F.~Duque, P.~Martens and S.~Mukohyama,
{\it Stability of the Fundamental Quasinormal Mode in Time-Domain Observations: The Elephant and the Flea Redux},
[arXiv:2205.08547 [gr-qc]].

	
\bibitem{LIGOScientific:2016aoc}
B.~P.~Abbott \textit{et al.} [LIGO Scientific and Virgo],
 {\it Observation of Gravitational Waves from a Binary Black Hole Merger},
Phys. Rev. Lett. \textbf{116} (2016) no.6, 061102

\bibitem{LIGOScientific:2017ycc}
B.~P.~Abbott \textit{et al.} [LIGO Scientific and Virgo],
 {\it GW170814: A Three-Detector Observation of Gravitational Waves from a Binary Black Hole Coalescence},
Phys. Rev. Lett. \textbf{119} (2017) no.14, 141101

\bibitem{LIGOScientific:2019fpa}
B.~P.~Abbott \textit{et al.} [LIGO Scientific and Virgo],
 {\it Tests of General Relativity with the Binary Black Hole Signals from the LIGO-Virgo Catalog GWTC-1},
Phys. Rev. D \textbf{100} (2019) no.10, 104036

\bibitem{LIGOScientific:2018dkp} B.~P.~Abbott \textit{et al.} [LIGO Scientific and Virgo], 
 {\it Tests of General Relativity with GW170817}, 
 Phys. Rev. Lett. \textbf{123} (2019) no.1, 011102 

	
	\bibitem{CE} H.-T. Cho and C.-L. Ho, {\it Quasi-exactly solvable quasi-normal modes}, J. Phys. A {\bf 40}, 1325 (2007).
	
	\bibitem{CM} Z. B. Cheng, Z. F. Shi, and Y. L. Mo, {\it Complex dispersion relations and evanescent waves in periodic beams via the extended differential quadrature method}, Composite Structures {\bf 187}, 122 (2018). 
	
    \bibitem{goed} K. G\"odel, {\it An Example of a New Type of Cosmological Solutions of Einstein's Field Equations of Gravitation}, Rev. Mod. Phys. {\bf 21}, 447 (1949).
	
	
	\bibitem{che} S. S. Chern, W. H. Chen, and K. S. Lam, {\it Lectures on Differential Geometry}, World Scientific, Singapure (2000).










	

  	
\end{thebibliography}
\end{document}